\numberwithin{equation}{section}
\begin{document}

\def\draft{\centerline{\resizebox{!}{1.25cm}{\textcolor{red}{!! DRAFT !!}}}}
\def\bfS{{\textbf{S}}}
\def\mod{\text{ mod }}

\newcommand{\vcenteredinclude}[2]{\begingroup
\setbox0=\hbox{\includegraphics[#1]{#2}}%
\parbox{\wd0}{\box0}\endgroup}

\newcounter{constnum}
\setcounter{constnum}{0}

\def\constlabel#1{\newcounter{#1} \setcounter{#1}{\theconstnum}}

\def\const#1{%
\noindent\fbox{\parbox{\linewidth}{%
\refstepcounter{constnum}
{{\sc Construction \Roman{constnum}} \quad #1}}}}

\renewcommand{\qed}{\nobreak \ifvmode \relax \else
      \ifdim\lastskip<1.5em \hskip-\lastskip
      \hskip1.5em plus0em minus0.5em \fi \nobreak
      \vrule height0.75em width0.5em depth0.25em\fi}

\newtheorem{theorem}{Theorem}[section]
\newtheorem{lemma}[theorem]{Lemma}
\newtheorem{conjecture}[theorem]{Conjecture}
\newtheorem{proposition}[theorem]{Proposition}
\newtheorem{corollary}[theorem]{Corollary}

\renewcommand{\qed}{\nobreak \ifvmode \relax \else
      \ifdim\lastskip<1.5em \hskip-\lastskip
      \hskip1.5em plus0em minus0.5em \fi \nobreak
      \vrule height0.75em width0.5em depth0.25em\fi}

\theoremstyle{definition}
\newtheorem{example}[theorem]{Example}
\newtheorem{definition}[theorem]{Definition}	
\newtheorem{construction}[theorem]{Construction}

\centerline{{\LARGE Array Orthogonality in Higher Dimensions}}
\medskip
\centerline{\large Sam Blake, Andrew Tirkel}
\smallskip
\centerline{\large \it School of Mathematical Sciences, Monash University, Australia}
\bigskip

\begin{abstract}
\noindent{\sc Abstract.}  We generalize the array orthogonality property for perfect autocorrelation
sequences to $n$-dimensional arrays. The generalized array orthogonality property is used
to derive a number of $n$-dimensional perfect array constructions. 
\end{abstract}


\section{Introduction}
Heimiller \cite{Heimiller1961} and Frank \cite{Frank1962} introduced a construction for perfect sequences of length
$n^2$ over $n$ roots of unity. Heimiller proved the construction produces perfect sequences of prime lengths 
by relating the autocorrelation of the sequence to the autocorrelation and cross-correlation of the columns of 
an array 
{\it associated} with the sequence. Similarly, sequences constructed by Milewski \cite{Milewski1983} used the same
method to prove they were perfect. Mow \cite{Mow1993} introduced the {\it array orthogonality property}, which 
generalized the proofs of Heimiller and Milewski to an arbitrary perfect sequence which is constructed by 
enumerating row-by-row the array associated with the sequence. Recently, the author \cite{Blake2014} gave a
sequence construction which possess the {\it array orthogonality property}.\\

\section{Preliminaries}

The \textit{periodic cross-correlation} of the sequences, $\textbf{a} =
\left[a_0,a_1, \cdots, a_{n-1}\right]$ and $\textbf{b} =
\left[b_0,b_1, \cdots, b_{n-1}\right]$ for shift $\tau$ is defined as 
$$\theta_{\textbf{a},\textbf{b}}(\tau) = \sum_{i=0}^{n-1}a_i b_{i+\tau}^*,$$
where $i+\tau$ is computed modulo $n$. Two sequences are
\textit{orthogonal} if $\theta_{\textbf{a},\textbf{b}}(\tau) = 0$ for
all $\tau$. \\

The \textit{periodic autocorrelation} of a sequence,
\textbf{s} for shift $\tau$ is given by $\theta_{\textbf{s}}(\tau) = \theta_{\textbf{s},\textbf{s}}(\tau)$. For 
$\tau \neq 0 \mod n$, $\theta_{\textbf{s}}(\tau)$ is called an \textit{off-peak} autocorrelation. A sequence 
is {\it perfect} if all off-peak periodic autocorrelation values are zero.\\

For applications, long perfect binary sequences are desired. However,
the longest known perfect binary sequence is the length 4 sequence
$\left[1,1,1,-1\right]$. It is conjectured that longer perfect
binary sequences do not exist\cite[conj. 3.9,
pp. 49]{Mow1993}. Consequently, sequences over roots of unity have 
been investigated for the last 60 years. \\

An $N$-dimensional array, \textbf{S}, over $n$ roots of unity is defined as 
$$\textbf{S} = \left[S_{i_0, i_1, \cdots, i_{N-1}} \right] = \omega^{f(i_0,i_1,\cdots, i_{N-1})},$$
where $f(i_0,i_1,\cdots, i_{N-1})$ is an integer function and $\omega$
is a primitive $n^{\text{th}}$ root of unity, that is $\omega =
e^{2\pi\sqrt{-1}/n}$. \\

A sequence is simply a one-dimensional array.  The periodic 
cross-correlation of two $N$-dimensional arrays, \textbf{A} and
\textbf{B}, both of size $l_0 \times l_1 \times\cdots\times l_{N-1}$, for shift $s_0, s_1, \cdots, s_{N-1}$ is defined as 
$$\theta_{\textbf{A}, \textbf{B}}\left(s_0, s_1, \cdots,
  s_{N-1}\right) = \sum_{i_0 = 0}^{l_0 - 1}\sum_{i_1 = 0}^{l_1 - 1}
\cdots \sum_{i_{N-1} = 0}^{l_{N-1} - 1} A_{i_0,i_1, \cdots, i_{N-1}}
B_{i_0 + s_0, i_1+s_1, \cdots, i_{N-1} + s_{N-1}}^*.$$ Similarily, the
periodic autocorrelation of a $N$-dimensional array for shift $s_0,
s_1, \cdots, s_{N-1}$ is given by $\theta_{\textbf{A}}\left(s_0,
s_1, \cdots, s_{N-1}\right) = \theta_{\textbf{A},\textbf{A}}\left(s_0,
s_1, \cdots, s_{N-1}\right)$. $\theta_{\textbf{A}}(s_0,s_1, \cdots, s_{N-1})$ is called an
\textit{off-peak} autocorrelation if not all $s_i = 0 \mod l_i$. An array is {\it perfect} if all 
off-peak autocorrelations are zero. 

\section{The Array Orthogonality Property}

We begin with the array orthogonality property (AOP). Consider a sequence 
$\textbf{s} = [s_0, s_1, \cdots, s_{ld^2-1}]$, then we call 
$$\textbf{S} = [S_{i,j}] = 
\left[
\begin{array}{cccccc}
 s_0 & s_1 & s_2 & \cdots  & \cdots  & s_{d-1} \\
 s_d & s_{d+1} & s_{d+2} & \cdots  & \cdots  & s_{2d-1} \\
 \vdots  & \vdots  & \vdots  &   &  & \vdots  \\
\vdots  & \vdots  &  \vdots &   &  & \vdots  \\
s_{(l-1)d} & s_{(l-1)d+1} & s_{(l-1)d+2} & \cdots  & \cdots  & s_{d l-1}
\end{array}
\right]$$
the array {\it associated} with \textbf{s} for the {\it divisor} $d$. We use the notation $\textbf{S}[n]$ to denote the 
$n$-th column of \textbf{S}. \\

\begin{definition}[AOP]\cite{Mow1993}
A sequence $\textbf{s} = [s_0, s_1, \cdots, s_{ld^2-1}]$ has the AOP for the divisor $d$ if the array 
\textbf{S} associated with \textbf{s} has the following two properties:
\begin{enumerate}
\item For all $\tau$ and $j_0 \neq j_1 \mod d$: \quad $\theta_{\textbf{S}[j_0], \textbf{S}[j_1]}(\tau) = 0$. (That is,
 any two distinct columns of \textbf{S} are orthogonal.)
\item For all $\tau \neq 0 \mod l d$: \quad $\displaystyle\sum_{j=0}^{d-1}\theta_{\textbf{S}[j]}(\tau) = 0$. 
(That is, the columns of \textbf{S} form a set of periodic complementary sequences.)
\end{enumerate}
\end{definition}

\begin{example}
We show that the Frank-Heimiller sequence of length 16 over 4 roots of unity has the AOP
for the divisor $d=4$. The sequence, in index notation (that is, the mapping: $2\pi\sqrt{-1}s_n/4 \rightarrow s_n$), is given by 
$$\textbf{s}=\left[0, 0, 0, 0, 0, 1, 2, 3, 0, 2, 0, 2, 0, 3, 2, 1\right],$$ 
and the array, \textbf{S}, associated with \textbf{s} for the divisor 4 is given by
$$\textbf{S} = \left[
\begin{array}{cccc}
 0 & 0 & 0 & 0 \\
 0 & 1 & 2 & 3 \\
 0 & 2 & 0 & 2 \\
 0 & 3 & 2 & 1
\end{array}
\right].$$ The cross-correlation of all 6 distinct pairs of columns is given by 
$$\theta _{[0,0,0,0], [0,1,2,3]}=\theta _{[0,0,0,0],[0,2,0,2]}=\theta _{[0,0,0,0],[0,3,2,1]}=
\theta _{[0,1,2,3],[0,2,0,2]}=\theta _{[0,1,2,3],[0,3,2,1]}=\theta _{[0,2,0,2],[0,3,2,1]}=[0,0,0,0].$$
So all distinct pairs of columns of \textbf{S} are orthogonal. Thus, \textbf{s} satisfies the first condition of the AOP. 
We now compute the autocorrelation of all the columns of \textbf{S}: 
\begin{align*}
\theta_{[0,0,0,0]} &= [4,4,4,4]\\
\theta_{[0,1,2,3]} &= [4,4\sqrt{-1},-4,-4\sqrt{-1}]\\
\theta_{[0,2,0,2]} &= [4,-4,4,-4]\\
\theta_{[0,3,2,1]} &= [4, -4\sqrt{-1},-4, 4\sqrt{-1}]
\end{align*}
For each off-peak shift, the sum of the autocorrelations of all the columns of \textbf{S} is zero. Thus, 
\textbf{s} satisfies the second condition of the AOP. 
\end{example}

\begin{theorem}
\cite{Mow1993} Any sequence with the AOP is perfect. 
\end{theorem}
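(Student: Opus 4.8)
The plan is to translate the one-dimensional periodic autocorrelation of $\textbf{s}$ into a sum of periodic correlations of the columns of the associated array $\textbf{S}$, and then annihilate each piece using one of the two AOP conditions. First I would reindex the linear position $k$ of the sequence by writing $k = id + j$ with $j = k \bmod d \in \{0,\ldots,d-1\}$ the column index and $i$ the row index, so that $\textbf{S}$ has $ld$ rows and $d$ columns (each column of length $ld$) and $s_k = S_{i,j}$. The key algebraic move is to split the shift the same way: write $\tau = \delta d + \epsilon$ with $\epsilon = \tau \bmod d$. Then $k+\tau = (i+\delta)d + (j+\epsilon)$, and whether $s_{k+\tau}$ lands in column $j+\epsilon$ of row $i+\delta$, or (after a carry) in column $j+\epsilon-d$ of row $i+\delta+1$, depends only on whether $j+\epsilon<d$.

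Carrying this out, I expect the identity
$$\theta_{\textbf{s}}(\tau) = \sum_{j=0}^{d-1-\epsilon}\theta_{\textbf{S}[j],\,\textbf{S}[j+\epsilon]}(\delta) \;+\; \sum_{j=d-\epsilon}^{d-1}\theta_{\textbf{S}[j],\,\textbf{S}[j+\epsilon-d]}(\delta+1),$$
where the row shifts $\delta$ and $\delta+1$ are read modulo $ld$, matching exactly the periodicity of the column correlations. The remaining argument is a case split on $\epsilon$. If $\epsilon \neq 0 \bmod d$, then in both sums the two column indices are distinct modulo $d$ (in the first sum $j$ versus $j+\epsilon$, in the second $j$ versus $j+\epsilon-d$), so every term vanishes by AOP condition~1 and hence $\theta_{\textbf{s}}(\tau)=0$. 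If $\epsilon = 0$, then the second sum is empty and the first collapses to $\sum_{j=0}^{d-1}\theta_{\textbf{S}[j]}(\delta)$; here $\tau=\delta d$ is off-peak precisely when $\delta \neq 0 \bmod ld$, so this sum vanishes by AOP condition~2.

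The step I expect to require the most care is the bookkeeping at the wraparound, that is, getting the carry term right: confirming that the tail columns $j \in \{d-\epsilon,\ldots,d-1\}$ pair with columns $j+\epsilon-d$ one row further along, and checking that those indices remain distinct modulo $d$ so that condition~1 still applies. The only other subtlety is matching the period $ld$ of the column correlations against the period $ld^2$ of $\textbf{s}$, so that the statement \emph{$\tau \neq 0 \bmod ld^2$ with $\epsilon=0$} translates precisely into \emph{$\delta \neq 0 \bmod ld$}, which is exactly the hypothesis range of condition~2. Once this indexing is pinned down, the vanishing of every off-peak autocorrelation follows immediately from the two AOP conditions, so $\textbf{s}$ is perfect.
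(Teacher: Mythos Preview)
Your proposal is correct and is essentially the paper's own proof: the paper writes $i=qd+r$, $\tau=q'd+r'$ and obtains the single formula $\theta_{\textbf{s}}(q'd+r')=\sum_{r=0}^{d-1}\theta_{\textbf{S}[r],\textbf{S}[r+r'\bmod d]}\bigl(q'+\lfloor (r+r')/d\rfloor\bigr)$, then applies condition~1 when $r'\neq 0$ and condition~2 when $r'=0$. Your two sums are exactly this formula with the floor term unpacked into the no-carry ($\lfloor\cdot\rfloor=0$) and carry ($\lfloor\cdot\rfloor=1$) ranges, so the approaches coincide.
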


\begin{proof}
The periodic autocorrelation of a sequence \textbf{s}, of length $l d^2$ for shift $\tau$ is given by 
$$\theta_{\textbf{s}}(\tau) = \sum_{i=0}^{l d^2 - 1} s_i \, s_{i+\tau}^*.$$
Change coordinates, let $i = q d + r$, ($r < d$), and $\tau = q' d + r'$, ($r'<d$). Then we have 
\begin{align*}
\theta_{\textbf{s}}(q'd+r') &= \sum_{r=0}^{d - 1} \sum_{q=0}^{l d - 1} s_{q d+r} \, s_{(q+q') d+r+r'}^*\\
&=  \sum_{r=0}^{d - 1} \sum_{q=0}^{l d - 1} s_{q d+r} \, s_{\left(q+q'+\left\lfloor \frac{r+r'}{d}\right\rfloor\right) d + (r+r' \mod d)}^*\\
&= \sum_{r=0}^{d - 1} \sum_{q=0}^{l d - 1}  S_{q,r} \, S_{q+q'+\left\lfloor \frac{r+r'}{d}\right\rfloor, (r+r' \mod d)}^*\\
&= \sum_{r=0}^{d - 1} \theta_{\textbf{S}[r],\textbf{S}[r+r' \mod d]} \left(q' + \left\lfloor \frac{r+r'}{d}\right\rfloor \right).
\end{align*}
For $r'\neq 0$, condition 1 of the AOP implies $\theta_{\textbf{s}}(\tau) = 0$. Otherwise, for $r'=0$, condition 2 of the AOP 
implies $\theta_{\textbf{s}}(\tau) = 0$. 
\end{proof}

The Frank and Heimiller sequences were the first sequences constructed which possessed the AOP. \\

\const{\hypertarget{FrankConst}{}\cite{Heimiller1961}\cite{Frank1962}
We construct a sequence \textbf{s} of length $n^2$ over $n$ roots of unity. Let $\textbf{S}' = [S'_{i,j}] = \omega^{i j}$ be 
an $n\times n$ array where $\omega=e^{2 \pi \sqrt{-1}/n}$. The sequence \textbf{s} is constructed by enumerating 
row-by-row the array $\textbf{S}'$.}\\

Heimiller showed \textbf{s} is perfect by showing $\textbf{S}'$ had the 
AOP. Heimiller's construction had the restriction that $n$ be a prime number. Frank generalized the 
Heimiller construction by removing this restriction. Other sequence constructions with the AOP include 
Milewski sequences \cite{Milewski1983} and constructions by the author \cite{Blake2014}. \\

\const{\hypertarget{MilewskiConst}{}\cite{Milewski1983} We construct a perfect sequence, \textbf{s}, of length $m^{2k+1}$ over 
$m^{k+1}$ roots of unity, where $k \geq 1$. Let $\textbf{u} = [u_i]$ be a Chu sequence \cite{Chu1972} of length $m$. Let 
$\textbf{S}' = [S'_{i,j}] = u_{i \mod m} \, \omega^{i j}$ be a $m^{k+1} \times m^k$ array where $\omega = e^{2 \pi \sqrt{-1}/m^{k+1}}$. 
The sequence \textbf{s} is constructed by enumerating row-by-row the array $\textbf{S}'$.}\\

\bigskip

The following construction borrows elements of the constructions of Frank and Milewski. The idea of 
using a piecewise function within a perfect sequence construction was introduced by Liu and Fan \cite{Liu2004}. \\

\const{\cite{Blake2014} We construct a perfect sequence of length $4mn^{k+1}$ over $2mn^k$ roots of unity. Let 
$\textbf{S}' = [S'_{i,j}] = \omega^{\lfloor i(i+j)/n\rfloor}$ be a $2mn^{k+1} \times 2$ array over $2mn^k$ roots of unity,
where $\omega=e^{2\pi\sqrt{-1}/(2mn^k)}$. The 
sequence \textbf{s} is constructed by enumerating row-by-row the array $\textbf{S}'$.}\\

Sequence constructions which do not have the AOP include Chu sequences \cite{Chu1972} and Liu--Fan 
sequences \cite{Liu2004}. We use these constructions within the higher dimensional constructions. Sequences 
with the AOP are yet to be used within perfect sequence constructions. \\

\section{The Generalized Array Orthogonality Property}

The idea that the AOP may be used to construct arrays in higher dimensions was used by Blake et al \cite{Blake2012}. 
We now turn our attention to AOP in higher dimensions. Consider two dimensions, let 
$\textbf{A} = [A_{i,j}]$ be an $n\times m$ array. Then the array, $\textbf{A}'$ is given by
{\small
$$
\left[
\begin{array}{lcc}
 \left[
\begin{array}{lccl}
 A_{0,0} & A_{0,1} & \cdots  & A_{0,d-1} \\
 A_{1,0} & A_{1,1} & \cdots  & A_{1,d-1} \\
 \vdots  & \vdots  & \ddots & \vdots  \\
 A_{d-1,0} & A_{d-1,1} & \cdots  & A_{d-1,d-1}
\end{array}
\right] & \cdots  & \left[
\begin{array}{lccl}
 A_{0,m-d-1} & A_{0,m-d} & \cdots  & A_{0,m-1} \\
 A_{1,m-d-1} & A_{1,m-d} & \cdots  & A_{1,m-1} \\
 \vdots  & \vdots  & \ddots & \vdots  \\
 A_{d-1,m-d-1} & A_{d-1,m-d} & \cdots  & A_{d-1,m-1}
\end{array}
\right] \\
 \left[
\begin{array}{lccl}
 A_{d,0} & A_{d,1} & \cdots  & A_{d,d-1} \\
 \vdots  & \vdots  &   & \vdots  \\
   &   & \ddots &   \\
 A_{2 d-1,0} & A_{2 d-1,1} & \cdots  & A_{2 d-1,d-1}
\end{array}
\right] & \ddots  & \vdots  \\
 \qquad\qquad\qquad\qquad\vdots  &  & \vdots  \\
 \left[
\begin{array}{lccl}
 A_{n-d-1,0} & A_{n-d-1,1} & \cdots  & A_{n-d-1,d-1} \\
 A_{n-d,0} & \vdots  &   & \vdots  \\
   &   & \ddots &   \\
 A_{n-1,0} & A_{n-1,1} & \cdots  & A_{n-1,d-1}
\end{array}
\right] & \cdots  & \left[
\begin{array}{lccl}
 A_{n-d-1,m-d-1} & A_{n-d-1,m-d} & \cdots  & A_{n-d-1,m-1} \\
 A_{n-d,m-d-1} & A_{n-d,m-d} & \cdots  & A_{n-d,m-1} \\
 \vdots  & \vdots  & \ddots & \vdots  \\
 A_{n-1,m-d-1} & A_{n-1,m-d} & \cdots  & A_{n-1,m-1}
\end{array}
\right]
\end{array}
\right].
$$}
We call this array the array {\it associated} with \textbf{A} for the {\it divisor} $d$. We 
use the notation $\textbf{A}'[k,l]$ to index $A_{i,j,k,l}$ for all $i,j$. We now state the $n$-dimensional
generalization of array association.\\

\begin{definition}[Array association]
Let $\textbf{A}'$ the $2n$-dimensional array $\textbf{A}' = [{A'}_{i_0,i_1, \cdots, i_{2n-1}}]$, then the 
$n$-dimensional array \textbf{A} associated with $\textbf{A}'$ for the divisor $d$ is given by 
$$\textbf{A} = \left[ {A'}_{d i_0 + i_n, di_1 + i_{n+1}, \cdots, d i_{n-1} + i_{2n-1}} \right].$$
\end{definition}

We can now state the {\it generalized array orthogonality property} (GAOP).\\

\begin{definition}[GAOP]
An $n$--dimensional array, \textbf{A}, has the GAOP for the divisor $d$ if the $2n$--dimensional array $\textbf{A}'$ associated 
with \textbf{A} has the following properties:
\begin{enumerate}
\item For all $s_0, s_1, \cdots, s_{n-1}$ and for all $i_n, i_{n+1}, \cdots, i_{2n-1}, j_n, j_{n+1}, \cdots, j_{2n-1} \mod d$ such that\newline 
$(i_n, i_{n+1}, \cdots, i_{2n-1}) \neq (j_n, j_{n+1}, \cdots, j_{2n-1})$: \quad
$$\theta_{\textbf{A}'[i_n, i_{n+1}, \cdots, i_{2n-1}], \textbf{A}'[j_n, j_{n+1}, \cdots, j_{2n-1}]}(s_0, s_1, \cdots, s_{n-1}) = 0.$$ (That is, all distinct 
$n$--dimensional arrays of $\textbf{A}'$ are orthogonal.)
\item For $s_0, s_1, \cdots, s_{n-1}\mod d$ such that not all $s_i = 0 \mod d$ (off-peak autocorrelation): \quad 
	$$\displaystyle\sum_{i_0=0}^{d-1}\sum_{i_1=0}^{d-1}\cdots\sum_{i_{n-1}=0}^{d-1}\theta_{\textbf{A}'[i_n, i_{n+1}, \cdots, i_{2n-1}]}
	(s_0, s_1, \cdots, s_{n-1}) = 0.$$ (That is, all the arrays $\textbf{A}'[i_n, i_{n+1}, \cdots, i_{2n-1}]$ form a set of periodic complementary arrays.)
\end{enumerate}
\end{definition}

We now state and prove our main theorem. \\

\begin{theorem}
Any $n$--dimensional array with the GAOP is perfect.
\end{theorem}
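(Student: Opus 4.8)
The plan is to generalize the one-dimensional proof of Theorem 3.3 (the AOP implies perfect) to $n$ dimensions by performing the analogous change of coordinates and reducing the $n$-dimensional autocorrelation of $\textbf{A}$ to a sum of cross-correlations of the $n$-dimensional subarrays $\textbf{A}'[i_n,\dots,i_{2n-1}]$. The key observation driving the whole argument is that, by the definition of array association, each index of $\textbf{A}$ decomposes as $di_k + i_{n+k}$ with $0 \le i_{n+k} < d$, which is exactly the division-with-remainder splitting used in the $1$-dimensional case, now applied independently along each of the $n$ axes.

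First I would fix an arbitrary shift $(t_0,t_1,\dots,t_{n-1})$ and write each shift coordinate via its own division by $d$, say $t_k = d\,s_k + s_{n+k}$ with $0 \le s_{n+k} < d$. Then I would expand $\theta_{\textbf{A}}(t_0,\dots,t_{n-1})$ using the definition of the $n$-dimensional periodic autocorrelation and substitute the array-association formula $A_{\dots} = A'_{di_0+i_n,\dots}$ into both factors. The central computation is to track the carries: along axis $k$ the remainder sum $i_{n+k}+s_{n+k}$ may exceed $d-1$, contributing a carry $\lfloor (i_{n+k}+s_{n+k})/d\rfloor$ into the quotient index and leaving residue $(i_{n+k}+s_{n+k}) \bmod d$. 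After collecting terms this rewrites the autocorrelation as
\begin{align*}
\theta_{\textbf{A}}(t_0,\dots,t_{n-1})
&= \sum_{i_n=0}^{d-1}\cdots\sum_{i_{2n-1}=0}^{d-1}
\theta_{\textbf{A}'[i_n,\dots,i_{2n-1}],\,\textbf{A}'[(i_n+s_n)\bmod d,\dots,(i_{2n-1}+s_{2n-1})\bmod d]} \\
&\qquad\times\left(s_0+\left\lfloor\tfrac{i_n+s_n}{d}\right\rfloor,\ \dots,\ s_{n-1}+\left\lfloor\tfrac{i_{2n-1}+s_{2n-1}}{d}\right\rfloor\right),
\end{align*}
the exact $n$-dimensional analogue of the single display in the proof of Theorem 3.3.

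With this reduction in hand, the conclusion follows by a case split on the residues $(s_n,\dots,s_{2n-1})$. If these residues are not all zero, then for every term in the multisum the second subarray index $((i_n+s_n)\bmod d,\dots)$ differs from the first $(i_n,\dots,i_{2n-1})$, so condition~1 of the GAOP forces each cross-correlation to vanish. If instead all $s_{n+k}=0$, then every carry is zero, the two subarray indices coincide, and the expression collapses to $\sum_{i_n=0}^{d-1}\cdots\sum_{i_{2n-1}=0}^{d-1}\theta_{\textbf{A}'[i_n,\dots,i_{2n-1}]}(s_0,\dots,s_{n-1})$; since the shift is off-peak, not all $s_k$ are zero, so condition~2 of the GAOP makes this sum vanish.

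I expect the main obstacle to be purely bookkeeping rather than conceptual: managing the $n$ simultaneous, independent carries and verifying that the change of summation variable from the composite indices back to the split indices $(i_k,i_{n+k})$ is a genuine bijection over the full index range, so that no terms are lost or double-counted when the quotient indices wrap around modulo the array dimensions. The honest subtlety is confirming that the periodicity of the outer (quotient) indices is respected when a carry is absorbed, which is what legitimizes folding the carry into the shift argument of the periodic cross-correlation; once that is checked along one axis it applies verbatim along all axes, and the case analysis above finishes the proof.
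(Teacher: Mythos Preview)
Your proposal is correct and follows essentially the same route as the paper: perform division-with-remainder on each coordinate of both the summation index and the shift, absorb the carries into the quotient shift, rewrite $\theta_{\textbf{A}}$ as a sum over residue tuples of cross-correlations of the subarrays $\textbf{A}'[\cdots]$, and then apply condition~1 when the residue shifts are not all zero and condition~2 when they are. Your identification of the only real subtlety (that absorbing a carry into the quotient shift is legitimate because the cross-correlation is periodic in those coordinates) is accurate, and your case split is in fact stated more carefully than the paper's own final sentence, which appears to have conditions~1 and~2 interchanged.
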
	

\begin{proof}
Consider the autocorrelation of the array $\textbf{A} = [A_{i_0, i_1, \cdots, i_{n-1}}]$, with size $m_0\times m_1\times\cdots\times m_{n-1}$,  
$$\theta_{\textbf{A}}(s_0, s_1, \cdots, s_{n-1}) = \sum_{i_0=0}^{m_0-1}\sum_{i_1=0}^{m_1-1}\cdots
	\sum_{i_{n-1}=0}^{m_{n-1}-1} A_{q_0 d+r_0, \cdots, q_{n-1}d+r_{n-1}} \, A_{i_0+s_0, i_1+s_1, \cdots, i_{n-1}+s_{n-1}}^*$$
Introduce the change of variables $i_k = q_k d + r_k$, ($r_k<d$), then we have 
$$\theta_{\textbf{A}}(s_0, s_1, \cdots, s_{n-1}) = \sum_{r_0=0}^{d-1}\cdots\sum_{r_{d-1}=0}^{d-1}
	\sum_{q_0=0}^{m_0/d-1}\cdots\sum_{q_{n-1}=0}^{m_{n-1}/d-1}
	A_{q_0 d+r_0, \cdots, q_{n-1}d+r_{n-1}} \, A_{q_0 d+r_0+s_0, \cdots, q_{n-1}d+r_{n-1}+s_{n-1}}^*.$$
As before, introduce the change of variables $s_k = q'_k d + r'_k$, ($r'_k<d$), then we have
\begin{align*}
\theta_{\textbf{A}}(s_0, s_1, \cdots, s_{n-1}) &= \sum_{r_0=0}^{d-1}\cdots\sum_{r_{n-1}=0}^{d-1}\sum_{q_0=0}^{m_0/d-1}\cdots\sum_{q_{n-1}=0}^{m_{n-1}/d-1}
	A_{q_0 d+r_0, \cdots, q_{n-1}d+r_{n-1}} \, A_{(q_0 + q'_0) d+ r_0+r'_0, \cdots, (q_{n-1} + q'_{n-1}) d+ r_{n-1}+r'_{n-1}}^*\\
&= \sum_{r_0=0}^{d-1}\cdots\sum_{r_{n-1}=0}^{d-1}\sum_{q_0=0}^{m_0/d-1}\cdots\sum_{q_{n-1}=0}^{m_{n-1}/d-1}
A_{q_0 d+r_0, \cdots, q_{n-1}d+r_{n-1}} \times\\
&\qquad\qquad A_{\left(q_0 + q'_0 \left\lfloor \frac{r_0 + r'_0}{d} \right\rfloor\right) d+ (r_0+r'_0 \mod d), 
		\cdots,\left(q_{n-1} + q'_{n-1} \left\lfloor \frac{r_{n-1} + r'_{n-1}}{d} \right\rfloor\right) d+ (r_{n-1}+r'_{n-1} \mod d)}^*.
\end{align*}
Let $\textbf{A}'$ be a $2n$--dimensional array with size $m_0/d\times m_1/d\times\cdots \times m_{n-1}/d \times d \times d\times\cdots\times d$. 
($\textbf{A}'$ the array \textit{associated} with \textbf{A}.) Then we have 
\begin{align*}
\theta_{\textbf{A}}(s_0, s_1, \cdots, s_{n-1}) &=\sum_{r_0=0}^{d-1}\cdots\sum_{r_{n-1}=0}^{d-1}\sum_{q_0=0}^{m_0/d-1}\cdots\sum_{q_{n-1}=0}^{m_{n-1}/d-1}  A'_{q_0, \cdots, q_{n-1}, r_0, \cdots, r_{n-1}} \times\\
&\qquad\quad{A'}_{q_0 + q'_0 \left\lfloor \frac{r_0 + r'_0}{d} \right\rfloor, \cdots, 
	q_{n-1} + q'_{n-1} \left\lfloor \frac{r_{n-1} + r'_{n-1}}{d} \right\rfloor, r_0 + r'_0 \mod d, \cdots, r_{n-1} + r'_{n-1} \mod d}^*\\
&= \sum_{r_0=0}^{d-1}\cdots\sum_{r_{n-1}=0}^{d-1}
		\theta_{{\bf A}'[r_0,  \cdots, r_{n-1}], {\bf A}'[r_0+r'_0 \mod d, \cdots, r_{n-1}+r'_{n-1} \mod d]} 
		\left(Q_0, Q_1, \cdots, Q_{n-1}\right),\\
\end{align*}
where $Q_k = q_k + q'_k\left\lfloor \frac{r_k+r_k}{d} \right\rfloor$, and $\textbf{A}'[i_0, i_1, \cdots, i_{n-1}]$ is the $n$--dimensional array 
$[A'_{i_0, i_1, \cdots, i_{2n-1}}]$ where $i_n, i_{n+1}, \cdots, i_{2n-1}$ are fixed for each array. When 
$r_0+r'_0 = r_1+r'_1 = \cdots = r_{n-1}+r'_{n-1}  = 0 \mod d$ condition 1 of the GAOP implies 
$\theta_{\textbf{A}}(s_0, s_1, \cdots, s_{n-1}) = 0$. Otherwise, condition 2 of the GAOP implies 
$\theta_{\textbf{A}}(s_0, s_1, \cdots, s_{n-1}) = 0$.
\end{proof}

Note that the divisor, $d$, does not have to be the same in each dimension. Furthermore, as is the case in 
one-dimension, the array $\textbf{A}'$ is perfect. \\

\begin{corollary}
The array $\textbf{A}\,'$ is perfect. 
\end{corollary}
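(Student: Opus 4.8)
The plan is to mirror the argument of the main theorem, but applied directly to the $2n$--dimensional array $\textbf{A}'$ rather than to $\textbf{A}$. The crucial simplification is that the autocorrelation of $\textbf{A}'$ treats all $2n$ coordinate directions independently, so \emph{no} carry terms $\lfloor (r_k+r'_k)/d\rfloor$ appear: a shift in the last $n$ coordinates simply sends $r_k \mapsto (r_k+t_{n+k})\bmod d$ with no spill into the $q$--directions. This is exactly what makes the corollary easier than the theorem it follows, and it is what is meant by the remark that ``as is the case in one dimension, the array $\textbf{A}'$ is perfect.''

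Concretely, write a shift of $\textbf{A}'$ as $(t_0,\cdots,t_{2n-1})$, where $t_0,\cdots,t_{n-1}$ act in the $q$--directions (of sizes $m_0/d,\cdots,m_{n-1}/d$) and $t_n,\cdots,t_{2n-1}$ act in the divisor directions (each of size $d$). Separating the defining double sum over $(q_0,\cdots,q_{n-1})$ and $(r_0,\cdots,r_{n-1})$ and reordering it, I would obtain the decomposition
$$\theta_{\textbf{A}'}(t_0,\cdots,t_{2n-1}) = \sum_{r_0=0}^{d-1}\cdots\sum_{r_{n-1}=0}^{d-1} \theta_{\textbf{A}'[r_0,\cdots,r_{n-1}],\,\textbf{A}'[(r_0+t_n)\bmod d,\cdots,(r_{n-1}+t_{2n-1})\bmod d]}(t_0,\cdots,t_{n-1}).$$
This is the analogue of the displayed identity in the proof of the main theorem, and verifying it is just the routine reindexing; the inner sum over the $q$'s is by definition the cross-correlation of the two $n$--dimensional sub-arrays indexed by $(r_0,\cdots,r_{n-1})$ and by $((r_0+t_n)\bmod d,\cdots,(r_{n-1}+t_{2n-1})\bmod d)$ at the shift $(t_0,\cdots,t_{n-1})$.

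I would then split the off-peak shifts of $\textbf{A}'$ into the two cases that invoke the two conditions of the GAOP. First, if some $t_{n+k}\not\equiv 0\bmod d$, then for every $(r_0,\cdots,r_{n-1})$ the two index tuples above are distinct modulo $d$, so condition $1$ of the GAOP kills every summand and the autocorrelation is $0$. Second, if all $t_{n+k}\equiv 0\bmod d$ (hence $t_{n+k}=0$), then off-peakness forces some $t_k\not\equiv 0\bmod m_k/d$ with $k<n$; here each cross-correlation collapses to $\theta_{\textbf{A}'[r_0,\cdots,r_{n-1}]}(t_0,\cdots,t_{n-1})$, and condition $2$ of the GAOP makes the sum over $(r_0,\cdots,r_{n-1})$ vanish. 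In either case every off-peak autocorrelation of $\textbf{A}'$ is zero, so $\textbf{A}'$ is perfect.

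Since the computation is almost entirely bookkeeping, I do not expect a genuine obstacle. The one point needing care is to check that the two cases above \emph{exactly} exhaust the off-peak shifts of $\textbf{A}'$ and land in the correct hypothesis: that ``not all of $t_0,\cdots,t_{2n-1}$ vanish modulo their respective sizes $m_0/d,\cdots,m_{n-1}/d,d,\cdots,d$'' splits cleanly as ``the divisor-direction shift is nontrivial'' (Case $1$, condition $1$) versus ``the divisor-direction shift is trivial but the $q$--direction shift is not'' (Case $2$, condition $2$). This is precisely the dichotomy already used for $\textbf{A}$ in the main theorem, here streamlined by the absence of the carry term.
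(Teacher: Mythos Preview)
Your argument is correct and is exactly the content behind the paper's one-line proof, which simply says ``The proof follows from the fact that $\textbf{A}$ has the GAOP.'' You have spelled out what that means: the autocorrelation of $\textbf{A}'$ at a shift $(t_0,\ldots,t_{2n-1})$ decomposes as a sum of cross-correlations of the sub-arrays $\textbf{A}'[r_0,\ldots,r_{n-1}]$, with no carry terms, and then GAOP conditions~1 and~2 handle the two cases. So your approach and the paper's are the same; yours is just fully written out.
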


\begin{proof}
The proof follows from the fact that \textbf{A} has the GAOP.
\end{proof}

\bigskip

We now show the value of the GAOP by stating a construction for perfect $m$--dimensional arrays which are constructed by
concatenating (perfect) $2m$--dimensional arrays. \\

\const{%
\constlabel{gfrank}\hypertarget{gfrankconst}
Let $$\textbf{S}' = [S'_{i_0, i_1, \cdots, i_{2m-1}}] = 
\omega^{\displaystyle\left(\prod _{n=m}^{2 m-1} i_n + \sum _{n=0}^{m-1} i_n\, i_{n+m}\right)}$$ 
be a $2m$--dimensional array of size $d\times d\times\cdots\times d$, where $\omega = e^{2 \pi \, \sqrt{-1}/d}$. Let 
\textbf{S} be the $m$--dimensional array of size $d^2 \times d^2\times\cdots\times d^2$, formed by concatenating 
the array $\textbf{S}'$.}

\bigskip

Construction \Roman{gfrank} can be thought of a multi--dimensional generalization of Heimiller--Frank sequences. For $m=1$, 
Construction \Roman{gfrank} produces Heimiller--Frank sequences. \\

\begin{example}\label{GAOPexample}
We show a $9\times9$ array, \textbf{S}, from Construction \Roman{gfrank} has the GAOP for the divisor $d=3$. The array
\textbf{S}, (in index notation), is given by 
$$\textbf{S} = \left[
\begin{array}{ccccccccc}
 0 & 2 & 1 & 1 & 1 & 1 & 2 & 0 & 1 \\
 2 & 2 & 2 & 0 & 1 & 2 & 1 & 0 & 2 \\
 1 & 2 & 0 & 2 & 1 & 0 & 0 & 0 & 0 \\
 1 & 0 & 2 & 2 & 2 & 2 & 0 & 1 & 2 \\
 1 & 1 & 1 & 2 & 0 & 1 & 0 & 2 & 1 \\
 1 & 2 & 0 & 2 & 1 & 0 & 0 & 0 & 0 \\
 2 & 1 & 0 & 0 & 0 & 0 & 1 & 2 & 0 \\
 0 & 0 & 0 & 1 & 2 & 0 & 2 & 1 & 0 \\
 1 & 2 & 0 & 2 & 1 & 0 & 0 & 0 & 0
\end{array}
\right],$$
and the array $\textbf{S}'$ associated with \textbf{S} for the divisor 3 is given by
$$\textbf{S}' = \left[
\begin{array}{ccc}
 \left[
\begin{array}{ccc}
 0 & 2 & 1 \\
 2 & 2 & 2 \\
 1 & 2 & 0
\end{array}
\right] & \left[
\begin{array}{ccc}
 1 & 1 & 1 \\
 0 & 1 & 2 \\
 2 & 1 & 0
\end{array}
\right] & \left[
\begin{array}{ccc}
 2 & 0 & 1 \\
 1 & 0 & 2 \\
 0 & 0 & 0
\end{array}
\right] \\
 \left[
\begin{array}{ccc}
 1 & 0 & 2 \\
 1 & 1 & 1 \\
 1 & 2 & 0
\end{array}
\right] & \left[
\begin{array}{ccc}
 2 & 2 & 2 \\
 2 & 0 & 1 \\
 2 & 1 & 0
\end{array}
\right] & \left[
\begin{array}{ccc}
 0 & 1 & 2 \\
 0 & 2 & 1 \\
 0 & 0 & 0
\end{array}
\right] \\
 \left[
\begin{array}{ccc}
 2 & 1 & 0 \\
 0 & 0 & 0 \\
 1 & 2 & 0
\end{array}
\right] & \left[
\begin{array}{ccc}
 0 & 0 & 0 \\
 1 & 2 & 0 \\
 2 & 1 & 0
\end{array}
\right] & \left[
\begin{array}{ccc}
 1 & 2 & 0 \\
 2 & 1 & 0 \\
 0 & 0 & 0
\end{array}
\right]
\end{array}
\right].$$
We show the arrays $\textbf{S}'[1,1]$, and $\textbf{S}'[0,2]$ are orthogonal. The arrays are given by 
$$\textbf{S}'[1,1] = \left[
\begin{array}{ccc}
 2 & 1 & 0 \\
 1 & 0 & 2 \\
 0 & 2 & 1
\end{array}
\right]\qquad \text{ and }\qquad 
\textbf{S}'[0,2] = \left[
\begin{array}{ccc}
 1 & 1 & 1 \\
 2 & 2 & 2 \\
 0 & 0 & 0
\end{array}
\right],$$
and their cross-correlation, for all shifts, is given by 
$$\theta_{\textbf{S}'[1,1], \textbf{S}'[0,2]} = \left[
\begin{array}{ccc}
 0 & 0 & 0 \\
 0 & 0 & 0 \\
 0 & 0 & 0
\end{array}
\right].$$
The full calculation of the cross-correlation of all 36 distinct pairs of arrays is given in Appendix I. The sum of the correlations
of all arrays $\textbf{S}'[i,j]$, for $0\leq i<3$ and $0\leq j < 3$ is given by 
$$\sum_{i,j} \theta_{\textbf{S}'[i,j]} = \left[
\begin{array}{ccc}
 81 & 0 & 0 \\
 0 & 0 & 0 \\
 0 & 0 & 0
\end{array}
\right].$$
So \textbf{S} satisfies the second condition of the GAOP.
\end{example}

\begin{theorem}\label{newcons1}
The array \textbf{S} from Construction \Roman{gfrank} is perfect. 
\end{theorem}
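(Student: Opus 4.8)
The plan is to reduce everything to the theorem already proved, namely that any $m$-dimensional array with the GAOP is perfect; it therefore suffices to verify the two GAOP conditions for $\textbf{S}'$, the array associated with $\textbf{S}$ for the divisor $d$. The structural observation that makes this tractable is that each labelled subarray factors as a constant phase times a separable (rank-one) array. Writing the fixed last $m$ indices as a label vector $\mathbf{j} = (j_0, \ldots, j_{m-1})$ (so $j_k = i_{m+k}$) and the free indices as $i_0, \ldots, i_{m-1}$, the construction gives
$$\textbf{S}'[\,j_0, \ldots, j_{m-1}\,]_{i_0, \ldots, i_{m-1}} = \omega^{\prod_{k=0}^{m-1} j_k}\,\prod_{k=0}^{m-1}\omega^{i_k j_k},$$
in which the leading factor $\omega^{\prod_k j_k}$ has unit modulus and is independent of the free indices. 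This separability is exactly what collapses each correlation sum into a product of one-dimensional character sums over $\mathbb{Z}/d\mathbb{Z}$.

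For condition 1, I would compute the cross-correlation of two distinct labelled subarrays $\textbf{S}'[\mathbf{u}]$ and $\textbf{S}'[\mathbf{v}]$, with $\mathbf{u} \not\equiv \mathbf{v} \bmod d$, at an arbitrary shift $(s_0, \ldots, s_{m-1})$. Pulling out the two constant phases and the shift-dependent factor $\omega^{-\sum_k s_k v_k}$, the remaining sum over the free indices factors as
$$\prod_{k=0}^{m-1}\;\sum_{i_k=0}^{d-1}\omega^{i_k(u_k - v_k)}.$$
Each inner geometric sum equals $d$ when $u_k \equiv v_k \bmod d$ and $0$ otherwise, because $\omega$ is a primitive $d$-th root of unity. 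Since $\mathbf{u} \not\equiv \mathbf{v}$, at least one factor vanishes, so the cross-correlation is $0$ for every shift, establishing condition 1.

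For condition 2, I would compute the autocorrelation of a single labelled subarray $\textbf{S}'[\mathbf{u}]$ at an off-peak shift $(s_0, \ldots, s_{m-1})$. Here the constant phase cancels against its own conjugate and the summand becomes independent of the free indices, yielding $\theta_{\textbf{S}'[\mathbf{u}]}(s_0,\ldots,s_{m-1}) = d^m\,\omega^{-\sum_k s_k u_k}$. Summing over all $d^m$ label vectors and factoring once more gives
$$\sum_{\mathbf{u}}\theta_{\textbf{S}'[\mathbf{u}]}(s_0,\ldots,s_{m-1}) = d^m\prod_{k=0}^{m-1}\;\sum_{u_k=0}^{d-1}\omega^{-s_k u_k},$$
and each factor is $d$ when $s_k \equiv 0 \bmod d$ and $0$ otherwise. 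Since the shift is off-peak, some $s_k \not\equiv 0$, so the sum vanishes and condition 2 holds; perfection of $\textbf{S}$ then follows immediately from the GAOP theorem.

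I expect the only genuine obstacle to be bookkeeping: keeping the $2m$ multi-indices organized, cleanly splitting them into free versus label indices, and handling the various $\bmod\, d$ reductions consistently. The one point that truly requires care, rather than routine computation, is checking the role of the product term $\prod_{n=m}^{2m-1} i_n$: I must confirm that it contributes only a harmless unit-modulus constant in the cross-correlation of condition 1 and cancels exactly in the autocorrelation of condition 2, so that it never interferes with the vanishing of the underlying character sums.
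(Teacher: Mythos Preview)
Your proposal is correct and follows essentially the same route as the paper: verify the two GAOP conditions for $\textbf{S}'$ by pulling out the constant phase $\omega^{\prod_k j_k}$ and factoring the remaining correlation sums into products of one-dimensional geometric (character) sums over $\mathbb{Z}/d\mathbb{Z}$, which vanish whenever the relevant exponent is nonzero modulo $d$. Your write-up is in fact a bit more careful than the paper's in noting that for condition~1 only \emph{at least one} factor need vanish (the paper's phrasing ``as $i_n\neq i'_n$ for all $m\le n<2m$'' is loose), and your explicit identification of the product term as a unit-modulus constant that cancels in the autocorrelation is exactly the right bookkeeping.
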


\begin{proof}
We show the array, $\textbf{S} = [S_{i_0, i_1, \cdots, i_{2m-1}}]$, is perfect by showing 
$\textbf{S}$ has the GAOP. Firstly, we show that all distinct $m$--dimensional arrays of $\textbf{S}'$ are orthogonal.
\begin{align*}
&\theta_{\textbf{S}'[i_m, i_{m+1}, \cdots, i_{2m-1}], \textbf{S}'[i'_m, i'_{m+1}, \cdots, i'_{2m-1}]}(s_0, s_1, \cdots, s_{m-1}) = \\
	&\sum_{i_0=0}^{d-1}\sum_{i_1=0}^{d-1} \cdots \sum_{i_{m-1}=0}^{d-1} 
	{S'}_{i_0, i_1, \cdots, i_{m-1},i_m, i_{m+1}, \cdots, i_{2m-1}} \, {S'}_{i_0+s_0, i_1+s_1, \cdots, i_{m-1}+s_{m-1}, i'_m, i'_{m+1}, \cdots, i'_{2m-1}}^*\\
&= \sum_{i_0=0}^{d-1}\sum_{i_1=0}^{d-1} \cdots \sum_{i_{m-1}=0}^{d-1} 
	\omega^{\left(\prod _{n=m}^{2 m-1} i_n + \sum _{n=0}^{m-1} i_n\, i_{n+m}\right)}
	\omega^{-\left(\prod _{n=m}^{2 m-1} i'_n + \sum _{n=0}^{m-1} \left((i_{n} + s_n)\, i'_{n+m}\right)\right)}\\
&= \left(\omega^{\prod_{n=m}^{2m-1}i_n - \prod_{n=m}^{2m-1}i'_n - \sum_{n=0}^{m-1}s_n\,i'_{n+m}}\right)
		\sum_{i_0=0}^{d-1}\sum_{i_1=0}^{d-1} \cdots \sum_{i_{m-1}=0}^{d-1} 
		\omega^{\sum_{n=0}^{m-1}(i_{n+m}-i'_{n+m})i_n}\\
&=  \left(\omega^{\prod_{n=m}^{2m-1}i_n - \prod_{n=m}^{2m-1}i'_n - \sum_{n=0}^{m-1}s_n\,i'_{n+m}}\right)
		\sum_{i_0=0}^{d-1}\sum_{i_1=0}^{d-1} \cdots \sum_{i_{m-1}=0}^{d-1} 
		\left(\prod_{n=0}^{m-1}\omega^{(i_{n+m}-i'_{n+m})i_n}\right)\\
&= \left(\omega^{\prod_{n=m}^{2m-1}i_n - \prod_{n=m}^{2m-1}i'_n - \sum_{n=0}^{m-1}s_n\,i'_{n+m}}\right)
	\left(\sum_{i_0=0}^{d-1}\omega^{(i_m - i'_{m})i_0}\right)\left(\sum_{i_1=0}^{d-1}\omega^{(i_{1+m} - i'_{1+m})i_1}\right) \times\cdots\\
&\qquad\qquad\qquad\times \left(\sum_{i_{m-1}=0}^{d-1}\omega^{(i_{2m-1} - i'_{2m-1})i_{m-1}}\right)
\end{align*}
The sums above are Gaussian sums, which are zero as $i_n \neq i'_n$ for all $m\leq n < 2m$. So \textbf{S} 
satisfies the first condition of the GAOP. We now show \textbf{S} satisfies the second condition of the GAOP. 
\begin{align*}
&\sum_{i=0}^{d-1}\sum_{i_1=0}^{d-1}\cdots\sum_{i_{m-1}=0}^{d-1} \theta_{\textbf{S}'[i_m, i_{m+1}, \cdots, i_{2m-1}]}
	(s_0, s_1, \cdots, s_{m-1})=\\
&\sum_{i=0}^{d-1}\sum_{i_1=0}^{d-1}\cdots\sum_{i_{2m-1}=0}^{d-1} {S'}_{i_0, i_1, \cdots, i_{2m-1}}\, 
	{S'}_{i_0+s_0, i_1+s_1, \cdots, i_{m-1}+s_{m-1}, i_m, i_{m+1}, \cdots, i_{2m-1}}^* \\
&= \sum_{i=0}^{d-1}\sum_{i_1=0}^{d-1}\cdots\sum_{i_{2m-1}=0}^{d-1} 
	\omega^{\left(\prod_{n=m}^{2m-1}i_n + \sum_{n=0}^{m-1}i_n\,i_{n+m}\right)}
	\omega^{-\left(\prod_{n=m}^{2m-1}i_n + \sum_{n=0}^{m-1} \left((i_n+s_n) i_{n+m}\right)\right)}\\
&= \sum_{i=0}^{d-1}\sum_{i_1=0}^{d-1}\cdots\sum_{i_{2m-1}=0}^{d-1}  \omega^{-\sum_{n=0}^{m-1}s_n\,i_{n+m}}\\
&= \sum_{i=0}^{d-1}\sum_{i_1=0}^{d-1}\cdots\sum_{i_{2m-1}=0}^{d-1} \left(\prod_{n=0}^{m-1} \omega^{-s_n\,i_{n+m}}\right)\\
&= \left(\sum_{i_m=0}^{d-1}\omega^{-s_0\,i_m}\right)\left(\sum_{i_{m+1}=0}^{d-1}\omega^{-s_1\,i_{m+1}}\right)\times\cdots\times \left(\sum_{i_{2m-1}=0}^{d-1}\omega^{-s_{m-1}\,i_{2m-1}}\right)
\end{align*}
Each of the above terms are Gaussian sums. At least one of the sums is zero, as we are computing off-peak autocorrelations. So 
\textbf{S} satisfies the second condition of the GAOP. Thus, \textbf{S} is a perfect array.
\end{proof}

In one dimension, the largest known sequence with the AOP is a \hyperlink{FrankConst}{Frank--Heimiller sequence}. The obvious 
question remains: are there GAOP--type constructions which build larger arrays than Construction \Roman{gfrank}? The following
construction produces arrays which are 4 times the size of Construction \Roman{gfrank}, but is restricted to two dimensions.\\

\const{%
\constlabel{const2rr2D}\hypertarget{const2rr2Dtarget}
Let 
$$\textbf{S}' = [{S'}_{i_0,i_1,i_2,i_3}] = \omega^{\displaystyle\left\lfloor\frac{(d\,i_0 + i_2)(d\,i_1+i_3)}{2d}\right\rfloor}$$
be a 4--dimensional array of size $2d\times 2d \times d \times d$, where $\omega = e^{2 \pi \sqrt{-1}/d}$ and $d$ is even. Let 
$$\textbf{S} = [S_{i, j}] = \omega^{\displaystyle\left\lfloor\frac{i\,j}{2d}\right\rfloor}$$ be an array of size $2d^2 \times 2d^2$ 
formed by concatenating the array $\textbf{S}'$.}

\begin{theorem}\label{const2rr2Dthm}
The array \textbf{S} from Construction \Roman{const2rr2D} is perfect. 
\end{theorem}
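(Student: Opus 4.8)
The plan is to prove that $\textbf{S}$ has the GAOP for the divisor $d$ and then to apply the theorem that any array with the GAOP is perfect. First I would check that $\textbf{S}'$ is indeed the array associated with $\textbf{S}$ for the divisor $d$: setting $u = d\,i_0 + i_2$ and $v = d\,i_1 + i_3$ gives $S_{u,v} = \omega^{\lfloor uv/(2d)\rfloor} = S'_{i_0,i_1,i_2,i_3}$, which is exactly the association map with $n=2$. I would also record that $\textbf{S}$ is genuinely $2d^2$-periodic in each index, since $\lfloor (u+2d^2)v/(2d)\rfloor = \lfloor uv/(2d)\rfloor + dv$ and $\omega^{dv}=1$, so that the autocorrelations are well defined. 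It then remains to verify the two GAOP conditions for the $2d\times 2d$ sub-arrays $\textbf{S}'[i_2,i_3]$ with $i_2,i_3\in\{0,\dots,d-1\}$, indexed by $(i_0,i_1)$.

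The key simplification uses that $d$ is even. Expanding $(d i_0+i_2)(d i_1+i_3) = d^2 i_0 i_1 + d(i_0 i_3 + i_1 i_2) + i_2 i_3$ and dividing by $2d$, the term $d\,i_0 i_1/2$ is an integer (here $d$ even is essential) and so pulls out of the floor, contributing the factor $\omega^{d i_0 i_1/2} = (-1)^{i_0 i_1}$. Writing $i_0 = 2a+\delta_0$ and $i_1 = 2b+\delta_1$ with $\delta_0,\delta_1\in\{0,1\}$ and $a,b\in\{0,\dots,d-1\}$, a short computation then yields
$$S'_{2a+\delta_0,\,2b+\delta_1,\,i_2,i_3} = C(\delta_0,\delta_1,i_2,i_3)\,\omega^{a\,i_3 + b\,i_2},$$
where $C(\delta_0,\delta_1,i_2,i_3) = (-1)^{\delta_0\delta_1}\,\omega^{\lfloor(\delta_0 i_3 + \delta_1 i_2)/2 + i_2 i_3/(2d)\rfloor}$ is a phase that does not depend on the bulk variables $a,b$. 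This is the crucial step: it linearises the dependence on $a,b$, so that each correlation sum over the bulk reduces to a product of Gaussian sums $\sum_{a=0}^{d-1}\omega^{a c}$, which vanish precisely when $c\not\equiv 0 \pmod d$.

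For condition 1, I would fix distinct labels $(i_2,i_3)\neq(j_2,j_3)$ and an arbitrary shift $(s_0,s_1)$. Reducing the shifted index modulo $2d$ sends $a\mapsto a'' = (a+\kappa_0)\bmod d$ and $b\mapsto b'' = (b+\kappa_1)\bmod d$, where $\kappa_0 = \lfloor(\delta_0+s_0)/2\rfloor$ and $\kappa_1 = \lfloor(\delta_1+s_1)/2\rfloor$ depend only on the parities and the shift, not on $a,b$. Hence, for each fixed parity class, the bulk sum factors as a phase times $\big(\sum_{a=0}^{d-1}\omega^{a(i_3-j_3)}\big)\big(\sum_{b=0}^{d-1}\omega^{b(i_2-j_2)}\big)$; since $(i_2,i_3)\neq(j_2,j_3)$ at least one frequency is nonzero modulo $d$, so one Gaussian factor is $0$ and the whole cross-correlation vanishes, irrespective of the phases $C$. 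For condition 2 the two sub-arrays carry the same label, the bulk frequencies are therefore zero, and each bulk sum collapses to $d^2$ times a phase; summing the result over the labels $i_2,i_3$ leaves Gaussian sums whose frequencies are dictated by $(s_0,s_1)$. In the clean case of even $s_0,s_1$ the phases cancel and one is left with $\big(\sum_{i_3}\omega^{-(s_0/2) i_3}\big)\big(\sum_{i_2}\omega^{-(s_1/2) i_2}\big)$, which vanishes because the shift is off-peak (so $s_0\neq 0$ or $s_1\neq 0$).

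The hard part will be the floor bookkeeping in condition 2 when a shift component is odd. If $s_0$ (or $s_1$) is odd, the parity $\delta_0$ flips for half the summands and feeds an extra unit into $\kappa_0$, so the phase product $C(\delta_0,\delta_1,\cdot)\,\overline{C(\delta_0'',\delta_1'',\cdot)}$ (with $\delta_0''$ the shifted parity) no longer cancels to $1$ and instead injects the half-integer floor terms $\lfloor(\delta_0 i_3 + \delta_1 i_2)/2 + i_2 i_3/(2d)\rfloor$ into the label sum. The main obstacle is to show that, after summing over the four parity classes and over all labels, these contributions still reassemble into a vanishing Gaussian sum for every off-peak $(s_0,s_1)$. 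I expect this to require a careful case analysis on the parities of $s_0$ and $s_1$, using $d$ even throughout; once it is settled, both GAOP conditions hold and the theorem that GAOP implies perfection finishes the proof.
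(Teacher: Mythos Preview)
Your plan is the paper's: prove the GAOP for divisor $d$ and invoke Theorem~4.3. The parity decomposition $i_0=2a+\delta_0$, $i_1=2b+\delta_1$, together with the observation that $d$ even makes $(d/2)\,i_0i_1$ an integer that escapes the floor, is exactly the mechanism the paper uses for condition~1, and your reduction to the Gaussian sums $\sum_a\omega^{a(i_3-j_3)}$, $\sum_b\omega^{b(i_2-j_2)}$ matches it. For condition~2 there is one small tactical difference: the paper splits instead over the parities of the \emph{label} indices $i_2,i_3$, so that the half-integer $\tfrac12(h\,i_2+v\,i_3)$ inside the floor becomes an integer and the two floors cancel, leaving a product of four geometric sums in $i_0,i_1,i_2,i_3$ which are then dispatched by casing on the parity of $h,v$. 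Your route---parity of $i_0,i_1$ first, then parity of the shift---is equally valid; in the even-shift case your label sums run over the full range $0,\dots,d-1$ and therefore vanish for every nonzero even shift (in fact this is a bit cleaner than the paper's length-$d/2$ sums, which do not individually vanish at $h=d$ and rely on cancellation across parity classes). The odd-shift bookkeeping you flag as the hard part is indeed a finite case check of the same flavour, and nothing beyond what you outline is required.
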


\begin{proof}
We show the array \textbf{S} is perfect by showing it has the GAOP for the divisor $d$. 
Firstly, we show that all distinct 2--dimensional arrays of $\textbf{S}'$ are orthogonal.
\begin{align*}
&\theta_{\textbf{S}'[i_2,i_3],\textbf{S}'[i'_2,i'_3]}(h,v) =\sum_{i_0=0}^{2d-1}\sum_{i_1=0}^{2d-1} {S'}_{i_0,i_1,i_2,i_3} {S'}_{i_0+v,i_1+h,i_2,i_3}^*\\
&= \omega^{-dhv/2} \sum_{i_0=0}^{2d-1}\sum_{i_1=0}^{2d-1} \omega^{-\frac{d}{2}h i_0 - \frac{d}{2}v i_1 + 
	\left\lfloor\frac{1}{2}(i_0 i_3 + i_1 i_2) + \frac{1}{2d}(i_2 i_3)\right\rfloor - 
	\left\lfloor \frac{1}{2}(i'_3i_0+ i'_2 i_1 + i'_3v +i'_2h) + \frac{1}{2d}(i'_2 i'_3)\right\rfloor}
\end{align*}
We split the summation into $i_0,i_1$ even and odd. Consider the case when $i_0,i_1$ are even, then we have
\begin{align*}
&\omega^{-dhv/2} \sum_{i_0=0}^{d-1}\sum_{i_1=0}^{d-1} \omega^{-\frac{d}{2}h (2 i_0) - \frac{d}{2}v (2 i_1) + 
	\lfloor\mathcal{A}\rfloor - \lfloor\mathcal{B}\rfloor}\\
&=\omega^{-dhv/2 + \lfloor\mathcal{C}\rfloor - \lfloor\mathcal{D}\rfloor} 
	\sum_{i_0=0}^{d-1}\sum_{i_1=0}^{d-1} \omega^{(i_3-i'_3-d h) i_0 + (i_2-i'_2-d v)i_1}\\	
&=\omega^{-dhv/2 + \lfloor\mathcal{C}\rfloor -  \lfloor\mathcal{D}\rfloor} 
	\left(\sum_{i_0=0}^{d-1} \omega^{(i_3-i'_3-d h) i_0}\right)
	\left(\sum_{i_1=0}^{d-1} \omega^{(i_2-i'_2-d v)i_1}\right),	
\end{align*}

where $\mathcal{A} = \frac{1}{2}(2 i_0 i_3 + 2 i_1 i_2) + \frac{1}{2d}(i_2i_3)$, 
	$\mathcal{B} =  \frac{1}{2}(2 i_0 i'_3 + vi'_3 + 2 i_1 i'_2 + hi'_2) + \frac{1}{2d}(i'_2 i'_3)$, 
	$\mathcal{C} = \frac{1}{2d}(i_2i_3)$, and 
	$\mathcal{D} =   \frac{1}{2}(vi'_3+hi'_2) + \frac{1}{2d}(i'_2 i'_3)$. Both 
the Gaussian sums above are zero as $i_3 \neq i'_3$ and $i_2\neq i'_2$. Now consider the case when $i_0,i_1$ are odd, then we have
\begin{align*}
&\omega^{-dhv/2} \sum_{i_0=0}^{d-1}\sum_{i_1=0}^{d-1} \omega^{-\frac{d}{2}h (2 i_0+1) - \frac{d}{2}v (2 i_1+1) + 
	\left\lfloor \mathcal{A}\right\rfloor - \left\lfloor \mathcal{B}\right\rfloor}\\
&= \omega^{-\frac{d}{2}(vh+h+v) + \left\lfloor \mathcal{C}\right\rfloor - \left\lfloor \mathcal{D}\right\rfloor} \sum_{i_0=0}^{d-1}\sum_{i_1=0}^{d-1} \omega^{(i_3-i'_3 - d h)i_0 + (i_2-i'_2-dv)i_1}\\
&= \omega^{-\frac{d}{2}(vh+h+v)  + \left\lfloor \mathcal{C}\right\rfloor - \left\lfloor \mathcal{D}\right\rfloor} \left(\sum_{i_0=0}^{d-1}\omega^{(i_3-i'_3 - d h)i_0}\right)
	\left(\sum_{i_1=0}^{d-1}\omega^{(i_2-i'_2-dv)i_1}\right),
\end{align*}

where $\mathcal{A} = \frac{1}{2}((2 i_0+1) i_3 + (2i_1+1) i_2) + \frac{1}{2d}(i_2i_3)$, 
	$\mathcal{B} = \frac{1}{2}((2 i_0+1)i'_3 + vi'_3 + (2 i_1+1) i'_2 + hi'_2) + \frac{1}{2d}(i'_2 i'_3)$, 
	$\mathcal{C} = \frac{1}{2}(i_3+i_2) + \frac{1}{2d}(i_2i_3)$, and 
	$\mathcal{D} = \frac{1}{2}(i'_3+vi'_3+i'_2+hi'_2) + \frac{1}{2d}(i'_2i'_3)$. 
Both the Gaussian sums above are zero as $i_3 \neq i'_3$ and $i_2\neq i'_2$. Thus, \textbf{S}  satisfies the first condition of the GAOP.\\ 

We now show \textbf{S} satisfies the second condition of the GAOP. 

\begin{align*}
\sum_{i_0=0}^{2d-1}\sum_{i_1=0}^{2d-1} \theta_{\textbf{S}'[i_2,i_3]}(h,v) &= \sum_{i_0=0}^{2d-1}\sum_{i_1=0}^{2d-1}\sum_{i_2=0}^{d-1}\sum_{i_3=0}^{d-1}
	{S'}_{i_0,i_1,i_2,i_3} {S'}_{i_0+v,i_1+h,i_2,i_3}^*\\
&= \omega^{-\frac{d}{2}v h} \sum_{i_0=0}^{2d-1}\sum_{i_1=0}^{2d-1}\sum_{i_2=0}^{d-1}\sum_{i_3=0}^{d-1} 
	\omega^{-\frac{d}{2}h i_0 - \frac{d}{2}v i_1 + \left\lfloor\mathcal{A}\right\rfloor - \left\lfloor\mathcal{B}\right\rfloor}\\
&= \omega^{-\frac{d}{2}v h} \sum_{i_0=0}^{2d-1}\sum_{i_1=0}^{2d-1}\sum_{i_2=0}^{d-1}\sum_{i_3=0}^{d-1} 
	\omega^{-\frac{d}{2}h i_0 - \frac{d}{2}v i_1 + \left\lfloor\mathcal{A}\right\rfloor - \left\lfloor\mathcal{A} + \frac{1}{2}(h i_2 + v i_3)\right\rfloor},
\end{align*}
where $\mathcal{A} = \frac{1}{2}(i_0i_3+i_1i_2) + \frac{1}{2d}(i_2i_3)$ and $\mathcal{B} = \frac{1}{2}(i_0i_3 + i_1i_2+hi_2+vi_3) + \frac{1}{2d}(i_2i_3)$. We split the summation into $i_2,i_3$ even and odd. Consider the case when $i_2, i_3$ are 
even. Then we have
\begin{align*}
&\omega^{-\frac{d}{2}v h} \sum_{i_0=0}^{2d-1}\sum_{i_1=0}^{2d-1}\sum_{i_2=0}^{d/2-1}\sum_{i_3=0}^{d/2-1} 
	\omega^{-\frac{d}{2}h i_0 - \frac{d}{2}v i_1 + \left\lfloor\mathcal{C}\right\rfloor - \left\lfloor\mathcal{C} + \frac{1}{2}(2 h i_2 + 2 v i_3)\right\rfloor}\\
&= \omega^{-\frac{d}{2}v h} \sum_{i_0=0}^{2d-1}\sum_{i_1=0}^{2d-1}\sum_{i_2=0}^{d/2-1}\sum_{i_3=0}^{d/2-1} 
	\omega^{-\frac{d}{2}h i_0 - \frac{d}{2}v i_1 - hi_2 - vi_3}\\
&= \omega^{-\frac{d}{2}v h}\left(\sum_{i_0=0}^{2d-1} \omega^{-\frac{d}{2} h i_0}\right)\left(\sum_{i_1=0}^{2d-1} \omega^{-\frac{d}{2} v i_1}\right)
	\left(\sum_{i_2=0}^{d/2-1} \omega^{-h i_2}\right)\left(\sum_{i_3=0}^{d/2-1} \omega^{-v i_3}\right),
\end{align*}
where $\mathcal{C} = i_0i_3+i_1i_2 + \frac{2}{d}(i_2i_3)$. When $h$ is odd: $\sum_{i_0=0}^{2d-1} \omega^{-\frac{d}{2} h i_0} = 0$, otherwise 
for $h$ even: $\sum_{i_2=0}^{d/2-1} \omega^{-h i_2} = 0$ for $h\neq 0$, similarly when $v$ is odd: 
$\sum_{i_1=0}^{2d-1} \omega^{-\frac{d}{2} v i_1}$, otherwise for $v$ even: $\sum_{i_3=0}^{d/2-1} \omega^{-v i_3} = 0$ for $v\neq 0$.\\

Now consider the case when $i_2,i_3$ is odd. Then we have 
\begin{align*}
&\omega^{-\frac{d}{2}v h} \sum_{i_0=0}^{2d-1}\sum_{i_1=0}^{2d-1}\sum_{i_2=0}^{d/2-1}\sum_{i_3=0}^{d/2-1} 
	\omega^{-\frac{d}{2}h i_0 - \frac{d}{2}v i_1 + \left\lfloor\mathcal{D}\right\rfloor - \left\lfloor\mathcal{D} + \frac{1}{2}(2 h i_2 + h + 2 v i_3 + v)\right\rfloor}\\
&= \omega^{-\frac{d}{2}v h - h - v} \sum_{i_0=0}^{2d-1}\sum_{i_1=0}^{2d-1}\sum_{i_2=0}^{d/2-1}\sum_{i_3=0}^{d/2-1} 
	\omega^{-\frac{d}{2}h i_0 - \frac{d}{2}v i_1 - hi_2 - vi_3}\\
&= \omega^{-\frac{d}{2}v h - h - v}\left(\sum_{i_0=0}^{2d-1} \omega^{-\frac{d}{2} h i_0}\right)\left(\sum_{i_1=0}^{2d-1} \omega^{-\frac{d}{2} v i_1}\right)
	\left(\sum_{i_2=0}^{d/2-1} \omega^{-h i_2}\right)\left(\sum_{i_3=0}^{d/2-1} \omega^{-v i_3}\right),
\end{align*}
where $\mathcal{D} = \frac{1}{2}(i_0(2i_3+1)+i_1(2i_2+1)) + \frac{1}{2d}(2i_2+1)(2i_3+1)$. Which as before, is zero. So \textbf{S} satisfies the 
second condition of the GAOP. Thus \textbf{S} is perfect. 
\end{proof}

\bigskip

The following construction is an $n$--dimensional generalization of Construction \Roman{const2rr2D}.\\

\const{%
\constlabel{const2rr} \hypertarget{const2rrtarget}
Let $$\textbf{S}' = [{S'}_{i_0, i_1, \cdots, i_{4m-1}}] = 
\omega^{\displaystyle \left\lfloor\frac{\sum _{n=0}^{m-1} (d\, i_n + i_{n+2m}) (d\, i_{n+m} + i_{n+3m})}{2d}\right\rfloor}$$ 
be a $4m$--dimensional array of size 
$\overbrace{2d\times 2d\times \cdots \times 2d}^\text{$2m$ terms}\times\overbrace{d\times d\times\cdots\times d}^\text{$2m$ terms}$, 
where $\omega = e^{2 \pi \, \sqrt{-1}/d}$ and $d$ is even. Let $$\textbf{S} = [S_{i_0, i_1, \cdots, i_{2m-1}}] = 
\omega^{\displaystyle \left\lfloor\frac{\sum _{n=0}^{m-1} i_n\, i_{n+m}}{2d}\right\rfloor}$$ 
be a $2m$--dimensional array of size $2d^2 \times 2d^2\times\cdots\times 2d^2$ formed by concatenating the array $\textbf{S}'$.}

\bigskip

\begin{theorem}\label{newcons2}
The array \textbf{S} from Construction \Roman{const2rr} is perfect. 
\end{theorem}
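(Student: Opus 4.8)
The plan is to invoke our main theorem: since any $n$--dimensional array with the GAOP is perfect, it suffices to verify that $\textbf{S}$ has the GAOP for the divisor $d$, where $\textbf{S}'$ is its associated $4m$--dimensional array. Construction \Roman{const2rr} is the exact $n$--dimensional analogue of Construction \Roman{const2rr2D}, so the computation should be the multi--index version of the proof of Theorem \ref{const2rr2Dthm}, and I would follow that template term by term. The one structural feature to exploit throughout is that $d$ is even, so that each $\tfrac{d}{2}\,i_n\,i_{n+m}$ arising from expanding $(d\,i_n+i_{n+2m})(d\,i_{n+m}+i_{n+3m})=d^2 i_n i_{n+m}+d\,i_n i_{n+3m}+d\,i_{n+2m}i_{n+m}+i_{n+2m}i_{n+3m}$ is an integer and may be pulled outside the floor, while $\omega^{d/2}=-1$.

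For the first GAOP condition I would write $\theta_{\textbf{S}'[i_{2m},\dots,i_{4m-1}],\,\textbf{S}'[i'_{2m},\dots,i'_{4m-1}]}(s_0,\dots,s_{2m-1})$ as a $2m$--fold sum over the ``outer'' indices $i_0,\dots,i_{2m-1}$, each of range $2d$. After extracting the integer $\tfrac{d}{2}$--terms, the residual floor argument is $\sum_{n}\bigl(\tfrac12 i_n i_{n+3m}+\tfrac12 i_{n+2m}i_{n+m}+\tfrac{1}{2d}i_{n+2m}i_{n+3m}\bigr)$. I would then split every outer index into even and odd residues; in each of the $2^{2m}$ cases the half--integer contributions reduce to a constant in the summation variables, so the floor separates off that constant and the sum factors as $\prod_{n}\bigl(\sum_{i_n}\omega^{(i_{n+3m}-i'_{n+3m})i_n}\bigr)\bigl(\sum_{i_{n+m}}\omega^{(i_{n+2m}-i'_{n+2m})i_{n+m}}\bigr)$ (the $-d\,s_\bullet$ pieces inside the coefficients are multiples of $d$ and hence invisible to $\omega$). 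These are Gaussian sums; since the two sub--arrays differ in at least one label, some coefficient is nonzero modulo $d$ and the corresponding factor vanishes.

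For the second GAOP condition I would sum the autocorrelations $\theta_{\textbf{S}'[i_{2m},\dots,i_{4m-1}]}(s_0,\dots,s_{2m-1})$ over all label indices, obtaining a sum over all $4m$ indices. Because the two factors now carry identical labels, the bulky floor argument $\mathcal{A}$ is common to both and cancels, leaving only $\lfloor\mathcal{A}\rfloor-\lfloor\mathcal{A}+\tfrac12\sum_n(s_n i_{n+3m}+s_{n+m}i_{n+2m})\rfloor$. Splitting the label indices into even and odd residues turns this difference into a constant plus the linear term $-\sum_n(s_n i_{n+3m}+s_{n+m}i_{n+2m})$, so the whole sum factors into Gaussian sums over the outer indices (length $2d$, coefficient $\tfrac{d}{2}s_\bullet$, contributing $(-1)^{s_\bullet}$) and over the halved label indices (length $d/2$, coefficient $s_\bullet$). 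For an off--peak shift some $s_k\not\equiv 0 \pmod d$: if $s_k$ is odd the associated length--$2d$ sum vanishes, and if $s_k$ is even and nonzero the associated length--$d/2$ sum vanishes. Either way the product is zero, which establishes the second condition and hence perfection.

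The genuine difficulty lies entirely in the floor--function bookkeeping rather than in any new idea: one must check that in each parity case the leftover fractional part is truly independent of the summation variables, so that the $4m$--dimensional sum collapses into a product of one--dimensional Gaussian sums. The odd--residue subcases, where stray half--integer constants such as $\tfrac12 s_k$ appear, are the fiddliest; keeping $d$ even is precisely what tames them, and I would organize the $2^{2m}$ parity cases uniformly (by writing each outer or label index as $2\,\widetilde{\imath}+\epsilon$ with $\epsilon\in\{0,1\}$) rather than enumerating them, so as to avoid a combinatorial blow--up while making the cancellations transparent.
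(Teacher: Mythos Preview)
Your proposal is correct and follows essentially the same route as the paper's own proof: verify the GAOP for the divisor $d$, expand $(d\,i_n+i_{n+2m})(d\,i_{n+m}+i_{n+3m})$, exploit $d$ even to strip integer multiples from the floor, split indices by parity, and factor into Gaussian sums. Your treatment of the $2^{2m}$ parity cases via $i=2\widetilde{\imath}+\epsilon$ is in fact cleaner than the paper's, which only writes out the ``all even'' and ``all odd'' subcases explicitly and waves at the rest.
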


\begin{proof}
We show the array \textbf{S} is perfect by showing it has the GAOP for the divisor $d$. Firstly, we show that all distinct $m$--dimensional arrays of $\textbf{S}'$ are orthogonal.
\begin{align*}
&\theta_{\textbf{S}'[i_{2m}, i_{2m+1}, \cdots, i_{4m-1}], \textbf{S}'[i'_{2m}, i'_{2m+1}, \cdots, i'_{4m-1}]}(s_0, s_1, \cdots, s_{2m-1}) = \\
&\sum_{i_0=0}^{d-1}\sum_{i_1=0}^{d-1} \cdots \sum_{i_{2m-1}=0}^{d-1} 
	{S'}_{i_0, i_1, \cdots, i_{4m-1}} \, {S'}_{i_0+s_0, i_1+s_1, \cdots, i_{2m-1}+s_{2m-1}, i'_{2m}, i'_{2m+1}, \cdots, i'_{4m-1}}^*\\
&=\sum_{i_0=0}^{d-1}\sum_{i_1=0}^{d-1} \cdots \sum_{i_{2m-1}=0}^{d-1} 	
	\left(\omega^{\left\lfloor\frac{\sum _{n=0}^{m-1} (d\, i_n + i_{n+2m}) 
	(d\, i_{n+m} + i_{n+3m})}{2d}\right\rfloor}\times\right.\\
&\qquad\qquad\left. \omega^{ -\left\lfloor\frac{
		\sum_{n=0}^{m-1} (d\, (i_n + s_n) + {i'}_{n+2m}) (d\, (i_{n+m} + s_{n+m}) + {i'}_{n+3m})}{2d}\right\rfloor}\right)\\
&=\sum_{i_0=0}^{d-1}\sum_{i_1=0}^{d-1} \cdots \sum_{i_{2m-1}=0}^{d-1}  \omega^{
	\left\lfloor\frac{\sum_{n=0}^{m-1} \mathcal{A}}{2d}\right\rfloor}
	\omega^{-\left\lfloor\frac{\sum_{n=0}^{m-1} \mathcal{B}}{2d}\right\rfloor},
\end{align*}
where $\mathcal{A} = d^2 i_n i_{m+n}+d\, i_{m+n} i_{2 m+n}+d\, i_n i_{3 m+n}+i_{2 m+n} i_{3m+n}$ and 
$\mathcal{B} = d^2 s_n i_{m+n}+d^2 i_n s_{m+n}+d^2 i_n i_{m+n}+d^2 s_n s_{m+n}+d
   {i'}_{2 m+n} i_{m+n} +d {i'}_{3 m+n} i_n+d s_n {i'}_{3 m+n}+d s_{m+n} {i'}_{2 m+n}+{i'}_{2 m+n} {i'}_{3 m+n}$. We split the 
summation above into $i_0, i_1, \cdots, i_{2m-1}$ even and odd. Consider the case when $i_0, i_1, \cdots, i_{2m-1}$ 
is even. Then we have 
\begin{align*}
&\sum_{i_0=0}^{d/2-1}\sum_{i_1=0}^{d/2-1} \cdots \sum_{i_{2m-1}=0}^{d/2-1}  \omega^{
	\sum_{n=0}^{m-1} \left(2 d i_n i_{m+n}+i_{2 m+n} i_{m+n}+i_n i_{3 m+n}\right) + 
	 \left\lfloor \frac{\sum_{n=0}^{m-1}\mathcal{C}}{2 d}\right\rfloor}   \times\\
&\qquad\qquad\omega^{-\sum_{n=0}^{m-1}\left( d s_n i_{m+n}+d i_n s_{m+n}+2 d i_n i_{m+n}+i_{m+n} {i'}_{2 m+n}+{i'}_{3
   m+n}i_n \right) - \left\lfloor \frac{\sum_{n=0}^{m-1}\mathcal{D}}{2 d}\right\rfloor}\\
&=\sum_{i_0=0}^{d/2-1}\sum_{i_1=0}^{d/2-1} \cdots \sum_{i_{2m-1}=0}^{d/2-1} \omega^{
	-\sum_{n=0}^{m-1}\left(\left(i_{3 m+n}-{i'}_{3 m+n} - d s_{m+n}\right)i_n + \left(i_{2 m+n}-{i'}_{2 m+n} - d s_n\right)i_{m+n} \right) + 
\left\lfloor \frac{\sum_{n=0}^{m-1} \mathcal{C}}{2 d}\right\rfloor - \left\lfloor \frac{\sum_{n=0}^{m-1} \mathcal{D}}{2 d}\right\rfloor}\\
&=\omega^{\left\lfloor \frac{\sum_{n=0}^{m-1}\mathcal{C}}{2 d}\right\rfloor - \left\lfloor \frac{\sum_{n=0}^{m-1}\mathcal{D}}{2 d}\right\rfloor}
\sum_{i_0=0}^{d/2-1}\sum_{i_1=0}^{d/2-1} \cdots \sum_{i_{2m-1}=0}^{d/2-1} \omega^{-\sum_{n=0}^{m-1}\left(
\left(i_{3 m+n}-{i'}_{3 m+n} - d s_{m+n}\right)i_n + \left(i_{2 m+n}-{i'}_{2 m+n} - d s_n\right)i_{m+n}\right)}\\
&= \omega^{\left\lfloor \frac{\sum_{n=0}^{m-1}\mathcal{C}}{2 d}\right\rfloor - \left\lfloor \frac{\sum_{n=0}^{m-1}\mathcal{D}}{2 d}\right\rfloor}
\left(\sum_{i_0=0}^{d/2-1} \omega^{(i_{3m} - {i'}_{3m} - d s_m) i_0} \right) 
\left(\sum_{i_1=0}^{d/2-1} \omega^{(i_{3m+1} - {i'}_{3m+1} - d s_{m+1}) i_1} \right) \times\cdots\times\\
&\qquad\qquad\left(\sum_{i_{m-1}=0}^{d/2-1} \omega^{(i_{4m-1} - {i'}_{4m-1} - d s_{2m-1}) i_{m-1}} \right)
\times \left(\sum_{i_m=0}^{d/2-1} \omega^{(i_{2m} - {i'}_{2m} - d s_0) i_m} \right)\times \\
&\qquad\qquad\left(\sum_{i_{m+1}=0}^{d/2-1} \omega^{(i_{2m+1} - {i'}_{2m+1} - d s_1) i_{m+1}} \right)\times\cdots
\times\left(\sum_{i_{2m-1}=0}^{d/2-1} \omega^{(i_{3m-1} - {i'}_{3m-1} - d s_{m-1}) i_{2m-1}} \right),
\end{align*}
where $\mathcal{C} = i_{2 m+n} i_{3 m+n}$ and $\mathcal{D} = d^2 s_n s_{m+n}+d s_n {i'}_{3 m+n}+d s_{m+n} {i'}_{2 m+n} +{i'}_{2 m+n} {i'}_{3 m+n}$. All the Gaussian sums above are zero as $i_k\neq {i'}_k$, for all $k$. Now consider the case where $i_0, i_1, \cdots, i_{2m-1}$ is odd. In 
this case we have $\mathcal{C} = d^2+d i_{2 m+n}+d i_{3 m+n}+i_{2 m+n} i_{3 m+n}$ and $\mathcal{D} = 
d^2 s_n s_{m+n}+d^2 s_{m+n}+d^2 s_n+d^2+d s_n {i'}_{3 m+n}+d {i'}_{2 m+n} s_{m+n}+d {i'}_{2 m+n}+
d {i'}_{3 m+n}+{i'}_{2 m+n} {i'}_{3 m+n}$, but the product of Gaussian sums is the same as the case above. Thus, \textbf{S} satisfies
the first condition of the GAOP. \\

We now show \textbf{S} satisfies the second condition of the GAOP.
\begin{align*}
&\sum_{i_0=0}^{d-1}\sum_{i_1=0}^{d-1}\cdots\sum_{i_{2m-1}=0}^{d-1} \theta_{\textbf{S}'[i_{2m}, i_{2m+1}, \cdots, i_{4m-1}]}
	(s_0, s_1, \cdots, s_{2m-1})=\\
&\sum_{i_0=0}^{d-1}\sum_{i_1=0}^{d-1}\cdots\sum_{i_{4m-1}=0}^{d-1} S_{i_0, i_1, \cdots, i_{4m-1}}
	S_{i_0+s_0, i_1+s_1, \cdots, i_{2m-1}+s_{2m-1}, i_{2m}, i_{2m+1}, \cdots, i_{4m-1}}^*\\
&=\sum_{i_0=0}^{d-1}\sum_{i_1=0}^{d-1}\cdots\sum_{i_{4m-1}=0}^{d-1}
	\omega^{\left\lfloor\frac{\sum _{n=0}^{m-1} (d\, i_n + i_{n+2m}) (d\, i_{n+m} + i_{n+3m})}{2d}\right\rfloor} \times\\
&\qquad\qquad \omega^{ -\left\lfloor\frac{\sum _{n=0}^{m-1} (d\, (i_n + s_n) + i_{n+2m}) (d\, (i_{n+m}+s_{n+m}) + i_{n+3m})}{2d}\right\rfloor}.
\end{align*}
We split the 
summation above into $i_0, i_1, \cdots, i_{4m-1}$ even and odd. Consider the case when $i_0, i_1, \cdots, i_{4m-1}$ 
is even. Then we have 
$$\sum_{i_0=0}^{d/2-1}\sum_{i_1=0}^{d/2-1}\cdots\sum_{i_{4m-1}=0}^{d/2-1} \omega^{-\sum_{n=0}^{m-1}\left(
	d s_{m+n} i_n + d s_n i_{m+n} + s_{m+n}i_{2 m+n} + s_n i_{3 m+n}\right) + 
	\left\lfloor \mathcal{A}\right\rfloor - 
	\left\lfloor \mathcal{A} + \frac{1}{2}\sum _{n=0}^{m-1}d s_n s_{m+n}\right\rfloor},$$
where $\mathcal{A} = \frac{1}{d}\sum _{n=0}^{m-1}2 i_{2 m+n} i_{3 m+n}$. As $d$ is even, 
$\left\lfloor \mathcal{A} + \frac{1}{2}\sum _{n=0}^{m-1}d s_n s_{m+n}\right\rfloor = 
\left\lfloor \mathcal{A}\right\rfloor +  \frac{1}{2}\sum _{n=0}^{m-1}d s_n s_{m+n}$, then the summation above becomes
\begin{align*}
&\omega^{-\frac{1}{2}\sum _{n=0}^{m-1}d s_n s_{m+n}}
\sum_{i_0=0}^{d/2-1}\sum_{i_1=0}^{d/2-1}\cdots\sum_{i_{4m-1}=0}^{d/2-1} \omega^{-\sum_{n=0}^{m-1}\left(
	d s_{m+n} i_n + d s_n i_{m+n} + s_{m+n}i_{2 m+n} + s_n i_{3 m+n}\right)}\\
&=\omega^{-\frac{1}{2}\sum _{n=0}^{m-1}d s_n s_{m+n}}
\sum_{i_0=0}^{d/2-1}\sum_{i_1=0}^{d/2-1}\cdots\sum_{i_{4m-1}=0}^{d/2-1} \omega^{-\sum_{n=0}^{m-1}\left(
	d s_{m+n} i_n + d s_n i_{m+n}\right)}
	\omega^{-\sum_{n=0}^{m-1}\left(s_{m+n}i_{2 m+n} + s_n i_{3 m+n}\right)}\\
&=\omega^{-\frac{1}{2}\sum _{n=0}^{m-1}d s_n s_{m+n}}
\left(\sum_{i_0=0}^{d/2-1}\sum_{i_1=0}^{d/2-1}\cdots\sum_{i_{2m-1}=0}^{d/2-1} \omega^{-\sum_{n=0}^{m-1}\left(
	d s_{m+n} i_n + d s_n i_{m+n}\right)}\right)\times\\
&\qquad\qquad\left(\sum_{i_{2m}=0}^{d/2-1}\sum_{i_{2m+1}=0}^{d/2-1}\cdots\sum_{i_{4m-1}=0}^{d/2-1}
	\omega^{-\sum_{n=0}^{m-1}\left(s_{m+n}i_{2 m+n} + s_n i_{3 m+n}\right)}\right)\\
&=\omega^{-\frac{1}{2}\sum _{n=0}^{m-1}d s_n s_{m+n}}
\left(\sum_{i_0=0}^{d/2-1}\sum_{i_1=0}^{d/2-1}\cdots\sum_{i_{2m-1}=0}^{d/2-1} \omega^{-\sum_{n=0}^{m-1}\left(
	d s_{m+n} i_n + d s_n i_{m+n}\right)}\right)\times\\
&\qquad\qquad\left(\sum_{i_{2m}=0}^{d/2-1}\sum_{i_{2m+1}=0}^{d/2-1}\cdots\sum_{i_{3m-1}=0}^{d/2-1}
	\omega^{-\sum_{n=0}^{m-1}s_{m+n}i_{2 m+n}}\right)
	\left(\sum_{i_{3m}=0}^{d/2-1}\sum_{i_{3m+1}=0}^{d/2-1}\cdots\sum_{i_{4m-1}=0}^{d/2-1}
	\omega^{-\sum_{n=0}^{m-1}s_n i_{3 m+n}}\right)\\
&=\omega^{-\frac{1}{2}\sum _{n=0}^{m-1}d s_n s_{m+n}}
\left(\sum_{i_0=0}^{d/2-1}\sum_{i_1=0}^{d/2-1}\cdots\sum_{i_{2m-1}=0}^{d/2-1} \omega^{-\sum_{n=0}^{m-1}\left(
	d s_{m+n} i_n + d s_n i_{m+n}\right)}\right)\times\\
&\qquad\qquad\left( \sum_{i_{2m}=0}^{d/2-1} \omega^{-s_m i_{2m}} \right)
\left( \sum_{i_{2m+1}=0}^{d/2-1} \omega^{-s_{m+1} i_{2m+1}} \right)\times\cdots\times
\left( \sum_{i_{3m-1}=0}^{d/2-1} \omega^{-s_{2m-1} i_{3m-1}} \right)\times\\
&\qquad\left( \sum_{i_{3m}=0}^{d/2-1} \omega^{-s_0 i_{3m}} \right)
\left( \sum_{i_{3m+1}=0}^{d/2-1} \omega^{-s_1 i_{3m+1}} \right)\times\cdots\times
\left( \sum_{i_{4m-1}=0}^{d/2-1} \omega^{-s_{m-1} i_{4m-1}} \right).
\end{align*}
As we are computing off-peak autocorrelations, at least one of the Gaussian sums above is zero. A similar calculation
shows the summation is zero for $i_0, i_1, \cdots, i_{4m-1}$ odd. So, \textbf{S} satisfies the second condition of the 
GAOP. Thus, \textbf{S} is perfect.
\end{proof}

It is currently unknown if constructions similar to Construction \Roman{const2rr2D} and Construction \Roman{const2rr} exist for 
$d$ odd. The arrays $\textbf{S}'$ and \textbf{S} in Constructions \Roman{const2rr2D} and \Roman{const2rr} are not perfect for $d$ odd. 

\bigskip

The following construction is a multi--dimensional generalization of \hyperlink{MilewskiConst}{Milewski's sequence construction}. \\

\const{%
\constlabel{constmilewg}\hypertarget{constmilewgconst}
Let $\textbf{u} = [u_0,u_1, \cdots, u_{r-1}]$ be a Chu sequence \cite{Chu1972}, then let
$$\textbf{S}' = [{S'}_{i_0,i_1,\cdots, i_{2m-1}}] = \left(\prod_{n=0}^{m-1}u_{i_n}\right) \omega^{\displaystyle
	\left(\prod_{n=m}^{2 m-1} i_n + \sum_{n=0}^{m-1} i_n\, i_{n+m}\right)}$$
be a $2m$--dimensional array of size $\overbrace{r^{k+1} \times\cdots\times r^{k+1}}^{\text{$m$ terms}} 
	\times \overbrace{r^k \times\cdots\times r^k}^{\text{$m$ terms}}$,
where $\omega = e^{2 \pi \sqrt{-1}/r^{k+1}}$ and $r$ is even. Let \textbf{S} be the $m$--dimensional array of size 
$r^{2k+1}\times r^{2k+1}\times\cdots\times r^{2k+1}$ formed by concatenating the array $\textbf{S}'$.}

\bigskip

\begin{theorem}\label{newcons2}
The array \textbf{S} from Construction \Roman{constmilewg} is perfect. 
\end{theorem}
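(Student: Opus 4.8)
The plan is to show that $\textbf{S}$ satisfies the GAOP for the divisor $d = r^k$ and then to invoke the main theorem, that any array with the GAOP is perfect. The value $d = r^k$ is forced by the dimensions: $\textbf{S}$ has size $r^{2k+1}$ in each of its $m$ directions, so its associated array has free dimensions of size $r^{2k+1}/d$ and complementary dimensions of size $d$, and matching these against the stated sizes $r^{k+1}$ and $r^k$ of $\textbf{S}'$ gives $d = r^k$. The argument runs parallel to the proof of Theorem~\ref{newcons1} for Construction~\Roman{gfrank}, since the exponential phase $\prod_{n=m}^{2m-1} i_n + \sum_{n=0}^{m-1} i_n i_{n+m}$ is identical and carries no floor function; the genuinely new features are the Chu factor $\prod_{n=0}^{m-1} u_{i_n}$ (with indices read modulo $r$, as in Milewski's construction) and the fact that the summation ranges $r^{k+1}$ and $r^k$ no longer equal the order $r^{k+1}$ of $\omega$.

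For condition~1, I would compute the cross-correlation of two distinct subarrays $\textbf{S}'[i_m, \ldots, i_{2m-1}]$ and $\textbf{S}'[i'_m, \ldots, i'_{2m-1}]$ at an arbitrary shift $(s_0, \ldots, s_{m-1})$. Factoring out the phase terms that are constant in the summation indices $i_0, \ldots, i_{m-1}$, the sum separates into a product over $n$ of one-dimensional sums of the shape $\sum_{i=0}^{r^{k+1}-1} u_{i \bmod r}\, \overline{u_{(i+s)\bmod r}}\, \omega^{c\,i}$ with $c = i_{n+m}-i'_{n+m}$. Writing $i = qr + t$ factors each such sum into a Chu term $\sum_{t=0}^{r-1} u_t\, \overline{u_{(t+s)\bmod r}}\,\omega^{c t}$ times a geometric sum $\sum_{q=0}^{r^k-1}\omega^{c q r} = \sum_{q=0}^{r^k-1} e^{2\pi\sqrt{-1}\,c q/r^k}$. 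The geometric factor vanishes exactly when $c \not\equiv 0 \bmod r^k$; as $i_{n+m}, i'_{n+m} \in \{0, \ldots, r^k-1\}$, this is the same as $i_{n+m} \neq i'_{n+m}$. Distinctness of the two subarrays forces at least one $c$ to be nonzero, killing the corresponding factor and hence the entire product.

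For condition~2, I would expand $\sum_{i_m, \ldots, i_{2m-1}} \theta_{\textbf{S}'[i_m, \ldots, i_{2m-1}]}(s_0, \ldots, s_{m-1})$, a sum over all $2m$ indices. Here the last indices coincide in both factors, so the phase $\prod_{n=m}^{2m-1} i_n$ cancels against its conjugate and the summand reduces to $\bigl(\prod_n u_{i_n}\,\overline{u_{(i_n+s_n)\bmod r}}\bigr)\,\omega^{-\sum_n s_n i_{n+m}}$. This splits into a product of Chu periodic autocorrelations $\prod_{n=0}^{m-1}\bigl(\sum_{i=0}^{r^{k+1}-1} u_{i\bmod r}\,\overline{u_{(i+s_n)\bmod r}}\bigr)$ over the free directions and a product of Gaussian sums $\prod_{n=0}^{m-1}\bigl(\sum_{i=0}^{r^k-1}\omega^{-s_n i}\bigr)$ over the complementary directions. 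By the substitution $i = qr+t$ each Chu factor equals $r^k\,\theta_{\textbf{u}}(s_n)$, which vanishes unless $s_n \equiv 0 \bmod r$ by perfectness of the Chu sequence. In the surviving case, writing $s_n = r\,s'_n$, each Gaussian factor becomes $\sum_{i=0}^{r^k-1} e^{-2\pi\sqrt{-1}\,s'_n i/r^k}$, which vanishes unless $s'_n \equiv 0 \bmod r^k$. Hence the product is nonzero only when every $s_n \equiv 0 \bmod r^{k+1}$, i.e.\ only at the peak; for every off-peak shift the sum is zero.

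The hard part will be the interplay between the two distinct moduli. Because $\omega$ has order $r^{k+1}$ while the complementary sums only run to $r^k$, the geometric sums by themselves do not vanish, unlike in Construction~\Roman{gfrank}; the argument closes only because the Chu factor supplies exactly the missing mod-$r$ orthogonality. The \emph{mod $r$} part of each shift is annihilated by the perfect Chu autocorrelation, while the residual \emph{mod $r^k$} part is annihilated by the geometric sums, and it is the bookkeeping of this two-stage vanishing—tracking the reduction $s_n \mapsto s'_n = s_n/r$ and the length factor $r^k$ contributed by each Chu autocorrelation—that demands care; everything else is a routine adaptation of the Theorem~\ref{newcons1} calculation. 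Once both GAOP conditions are established for $d = r^k$, the main theorem gives that \textbf{S} is perfect.
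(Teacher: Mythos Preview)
Your proposal is correct and follows the paper's route: verify the GAOP for the divisor $d=r^k$ and invoke the main theorem. Condition~2 is handled identically in both---factor into a product of Chu autocorrelations over the free directions and geometric sums over the complementary ones, then use perfectness of $\textbf{u}$ for $s_n\not\equiv 0\bmod r$ and the geometric sum for $s_n\equiv 0\bmod r$, $s_n\neq 0$. The one genuine difference is in Condition~1. The paper substitutes the explicit even-$r$ Chu formula $u_n=e^{\pi\sqrt{-1}\,p\,n^2/r}$ to collapse each one-dimensional factor into a single length-$r^{k+1}$ geometric sum $\sum_{i_n}e^{(2\pi\sqrt{-1}/r^{k+1})(-r^kps_n+i'_{n+m}-i_{n+m})i_n}$, which vanishes because $i'_{n+m}-i_{n+m}$ is not divisible by $r^k$. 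You instead use only the period-$r$ periodicity of $\textbf{u}$: writing $i=qr+t$ splits off a length-$r^k$ geometric sum $\sum_q\omega^{cqr}=\sum_q e^{2\pi\sqrt{-1}cq/r^k}$ that vanishes directly from $c=i_{n+m}-i'_{n+m}\neq 0$, with the ``Chu term'' left as an irrelevant constant factor. Your variant is a bit more modular---Condition~1 then holds for \emph{any} length-$r$ sequence $\textbf{u}$ and does not invoke the parity hypothesis on $r$---while the paper's is a direct one-line plug-in; either closes the argument.
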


\begin{proof}
We show the array \textbf{S} is perfect by showing it has the GAOP for the divisor $r^k$. Firstly, we show that all 
distinct $m$--dimensional arrays of $\textbf{S}'$ are orthogonal.
\begin{align*}
&\theta_{\textbf{S}'[i_m,i_{m+1}, \cdots, i_{2m-1}],\textbf{S}'[i'_m,i'_{m+1}, \cdots, i'_{2m-1}]}(s_0,s_1, \cdots, s_{m-1})=\\
&\sum_{i_0=0}^{r^{k+1}-1} \sum_{i_1=0}^{r^{k+1}-1} \cdots \sum_{i_{m-1}=0}^{r^{k+1}-1}
	{S'}_{i_0, i_1, \cdots, i_{2m-1}} {S'}_{i_0+s_0, i_1+s_1, \cdots, i_{m-1} + s_{m-1}, i'_{m}, i'_{m+1}, \cdots, i'_{2m-1}}^*\\
&= \sum_{i_0=0}^{r^{k+1}-1} \sum_{i_1=0}^{r^{k+1}-1} \cdots \sum_{i_{m-1}=0}^{r^{k+1}-1}
	\left(\left(\prod_{n=0}^{m-1} u_{i_n}\right) \omega^{\prod_{n=m}^{2m-1} i_n + \sum_{n=0}^{m-1} i_n i_{n+m}}\right)
		\left(\left(\prod_{n=0}^{m-1} u_{i_n+s_n}\right)\omega^{\prod_{n=m}^{2m-1} {i'}_n + \sum_{n=0}^{m-1} (i_n+s_n) i'_{n+m}}\right)^*\\
&= \omega^{\prod_{n=m}^{2m-1} {i_n} - \prod_{n=m}^{2m-1} {i'}_n - \sum_{n=0}^{m-1}s_n i_{n+m}}
		\sum_{i_0=0}^{r^{k+1}-1} \sum_{i_1=0}^{r^{k+1}-1} \cdots \sum_{i_{m-1}=0}^{r^{k+1}-1}
			\left(\left(\prod_{n=0}^{m-1}u_{i_n} u_{i_n+s_n}^*\right)\omega^{-\sum_{n=0}^{m-1} (i_{n+m} - 'i_{n+m})i_n}\right)\\
&= \omega^{\prod_{n=m}^{2m-1} {i_n} - \prod_{n=m}^{2m-1} {i'}_n - \sum_{n=0}^{m-1}s_n i_{n+m}}
		\left(\sum_{i_0=0}^{r^{k+1}-1} \sum_{i_1=0}^{r^{k+1}-1} \cdots \sum_{i_{m-1}=0}^{r^{k+1}-1}
			\prod_{n=0}^{m-1}u_{i_n} u_{i_n+s_n}^* \omega^{(i'_{n+m} - i_{n+m})i_n} \right)\\
&= \omega^{\prod_{n=m}^{2m-1} {i_n} - \prod_{n=m}^{2m-1} {i'}_n - \sum_{n=0}^{m-1}s_n i_{n+m}}
		\left(\prod_{n=0}^{m-1}\sum_{i_n=0}^{r^{k+1}-1}u_{i_n} u_{i_n+s_n}^* \omega^{(i'_{n+m} - i_{n+m})i_n}\right)
\end{align*}
As $r$ is even, a term in the Chu sequence is given by $u_n = e^{\left(\frac{\pi \sqrt{-1}}{r}\right)p n^2}$, where $p$
is relatively prime to $r^{k+1}$.  Then we have: 
$\sum_{i_n=0}^{r^{k+1}-1}u_{i_n} u_{i_n+s_n}^* \omega^{(i'_{n+m} - i_{n+m})i_n} = 
\sum_{i_n=0}^{r^{k+1}-1} e^{\left(\frac{2 \pi \sqrt{-1}}{r^{k+1}}\right)\left(\frac{r^k}{2}p i_n^2 - 
\frac{r^k}{2}p (i_n+s_n)^2 +  i'_{n+m} - i_{n+m}\right)i_n} = 
e^{\left(\frac{2 \pi \sqrt{-1}}{r^{k+1}}\right)\left(-\frac{1}{2}p r^k s_n^2\right)} 
\sum_{i_n=0}^{r^{k+1}-1} e^{\left(\frac{2 \pi \sqrt{-1}}{r^{k+1}}\right)\left(-r^k p s_n + i'_{n+m} - i_{n+m}\right)i_n}
= 0$ as $i'_{n+m} \neq i_{n+m}$. So \textbf{S} satisfies the first condition of the GAOP. We now show \textbf{S} 
satisfies the second condition of the GAOP.

\begin{align*}
&\sum_{i_0 = 0}^{r^{k+1}-1}\cdots\sum_{i_{m-1} = 0}^{r^{k+1}-1}
	\theta_{\textbf{S}'[i_m,i_{m+1}, \cdots, i_{2m-1}]}(s_0,s_1,\cdots,s_{m-1})=\\
&\sum_{i_0 = 0}^{r^{k+1}-1}\cdots\sum_{i_{m-1} = 0}^{r^{k+1}-1}
	\sum_{i_{m} = 0}^{r^{k}-1}\cdots\sum_{i_{2m-1} = 0}^{r^{k}-1}
	{S'}_{i_0,i_1,\cdots, i_{2m-1}} {S'}_{i_0+s_0, i_1+s_1, \cdots, i_{m-1}+s_{m-1}, i_m,i_{m+1}, \cdots, i_{2m-1}}^*\\
&= \sum_{i_0 = 0}^{r^{k+1}-1}\cdots\sum_{i_{m-1} = 0}^{r^{k+1}-1}
	\sum_{i_{m} = 0}^{r^{k}-1}\cdots\sum_{i_{2m-1} = 0}^{r^{k}-1}
	\left(\left(\prod_{n=0}^{m-1}u_{i_n}\right) \omega^{\left(\prod_{n=m}^{2 m-1} i_n + \sum_{n=0}^{m-1} i_n\, i_{n+m}\right)}	\right)\times\\
&\qquad\qquad\left(\left(\prod_{n=0}^{m-1}u_{i_n+s_n}\right) \omega^{\left(\prod_{n=m}^{2 m-1} i_n + \sum_{n=0}^{m-1} (i_n+s_n)\, i_{n+m}\right)}	\right)^*\\
&= \sum_{i_0 = 0}^{r^{k+1}-1}\cdots\sum_{i_{m-1} = 0}^{r^{k+1}-1}
	\sum_{i_{m} = 0}^{r^{k}-1}\cdots\sum_{i_{2m-1} = 0}^{r^{k}-1}
	\left(\left(\prod_{n=0}^{m-1}u_{i_n}u_{i_n+s_n}^*\right)\omega^{-\sum_{n=0}^{m-1}s_n i_{n+m}} \right)\\
&= \sum_{i_0 = 0}^{r^{k+1}-1}\cdots\sum_{i_{m-1} = 0}^{r^{k+1}-1}
	\sum_{i_{m} = 0}^{r^{k}-1}\cdots\sum_{i_{2m-1} = 0}^{r^{k}-1}
	\left(\left(\prod_{n=0}^{m-1}u_{i_n}u_{i_n+s_n}^*\right) \left(\prod_{n=0}^{m-1}\omega^{-s_n i_{n+m}} \right)\right)\\
&= \left(\sum_{i_0 = 0}^{r^{k+1}-1}\cdots\sum_{i_{m-1} = 0}^{r^{k+1}-1}
	\left(\prod_{n=0}^{m-1}u_{i_n}u_{i_n+s_n}^*\right)\right)
	\left(\sum_{i_{m} = 0}^{r^{k}-1}\cdots\sum_{i_{2m-1} = 0}^{r^{k}-1}
	\left(\prod_{n=0}^{m-1}\omega^{-s_n i_{n+m}} \right)\right)\\
&= \left(\prod_{n=0}^{m-1}\sum_{i_n = 0}^{r^{k}-1} u_{i_n}u_{i_n+s_n}^*\right)
	\left(\prod_{n=0}^{m-1}\sum_{i_n = 0}^{r^{k}-1}\omega^{-s_n i_{n+m}}\right)
\end{align*}
For $s_n \neq 0 \mod r$, $\sum_{i_n = 0}^{r^{k}-1} u_{i_n}u_{i_n+s_n}^* = 0$, otherwise for $s_n = 0 \mod r$ and $s_n \neq 0$,
$\sum_{i_n = 0}^{r^{k}-1}\omega^{-s_n i_{n+m}} = 0$. So \textbf{S} satisfies the second condition of the GAOP. Thus \textbf{S} is 
perfect.
\end{proof}

Finally, we note that the array $\textbf{S}\,'$ from Construction \Roman{constmilewg} is perfect for odd $r$.

\begin{theorem}\label{newcons2thm}
The array $\textbf{S}\,'$ from Construction \Roman{constmilewg} is perfect for odd $r$. 
\end{theorem}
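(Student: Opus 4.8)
The plan is to prove $\textbf{S}'$ is perfect by directly evaluating its periodic autocorrelation $\theta_{\textbf{S}'}(s_0,s_1,\cdots,s_{2m-1})$ and showing it vanishes at every off-peak shift. Note first that for even $r$ the statement would already follow from the Corollary, since there \textbf{S} is perfect and $\textbf{S}'$ is the array associated with it; the substance of the theorem is the odd case, where \textbf{S} need not be perfect and a direct argument is required. Accordingly I would take the odd-length Chu sequence $u_n = e^{\pi\sqrt{-1}\,p\,n(n+1)/r} = \omega^{r^k p\, n(n+1)/2}$ with $\gcd(p,r)=1$, so that the product $u_{i_n} u_{i_n+s_n}^{*}$ collapses to the pure linear phase $\omega^{-r^k p\, s_n i_n}$ up to a factor independent of $i_n$. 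Here the integrality of $n(n+1)/2$ plays the role that $r^k/2\in\mathbb{Z}$ played in the even-$r$ proof of Construction \Roman{constmilewg}.

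Two structural facts drive the argument. First, $\textbf{S}'$ is genuinely periodic with period $r^{k+1}$ in each of the first $m$ (size-$r^{k+1}$) directions, so a shift there may be absorbed by substitution with no wraparound; by contrast $\textbf{S}'$ is periodic only with period $r^{k+1}$ in the last $m$ (size-$r^{k}$) directions, so a nonzero shift $s_{n+m}$ there genuinely wraps and I must reduce the shifted index modulo $r^k$. Second, and this is the observation that unlocks everything, the coupling factor $\prod_{n=m}^{2m-1} i_n$ involves only last-block indices, whereas each first-block index $i_t$ occurs only in $u_{i_t}$ and in the single cross term $i_t\, i_{t+m}$.

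I would then split on the last-block shifts. If $s_m\equiv\cdots\equiv s_{2m-1}\equiv 0 \pmod{r^k}$, there is no wraparound and the product difference $\prod_{n=m}^{2m-1} i_n-\prod_{n=m}^{2m-1}(i_n+s_n)$ is zero, so $\theta_{\textbf{S}'}$ factors completely into one Gaussian sum per coordinate; off-peakness forces some first-block $s_t\neq 0$, and then either $s_t\not\equiv 0\pmod r$ annihilates the sum over $i_t$, or $s_t\equiv 0\pmod r$ with $s_t\neq 0$ annihilates the sum over $i_{t+m}$, exactly as in the one-dimensional Heimiller--Frank case. If instead some $s_{t+m}\not\equiv 0\pmod{r^k}$, I would carry out the sum over the paired first-block index $i_t$ first. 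Because the coupling product and every other factor are constant in $i_t$, this inner sum is $\sum_{i_t=0}^{r^{k+1}-1}\omega^{c\,i_t}$ with $c = -r^k p\,s_t+(i_{t+m}-J_{t+m})$, where $J_{t+m}=(i_{t+m}+s_{t+m})\bmod r^k$. Since $i_{t+m}-J_{t+m}\equiv -s_{t+m}\pmod{r^k}$ whether or not wraparound occurs, one has $c\equiv -s_{t+m}\not\equiv 0\pmod{r^k}$, hence $c\not\equiv 0\pmod{r^{k+1}}$, so the Gaussian sum vanishes and $\theta_{\textbf{S}'}=0$ in this case too.

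The delicate point, and the only place where the multi-dimensional structure could cause trouble, is precisely the interaction of the last-block wraparound with the non-separable product $\prod_{n=m}^{2m-1} i_n$; this is the same phenomenon that forced the parity splits and the restriction to even $d$ in Constructions \Roman{const2rr2D} and \Roman{const2rr}. What dissolves it here is that the product carries no first-block index, so summing out the first-block index $i_t$ paired with a nonzero last-block shift leaves a clean Gaussian sum whose vanishing is insensitive to the wraparound, the latter perturbing the exponent only by a multiple of $r^k$, which cannot cancel the surviving $-s_{t+m}$ modulo $r^k$. The remaining work is routine bookkeeping: confirming exactly which factors depend on $i_t$, checking that the odd-length Chu phase is genuinely linear in $i_t$, and verifying the residue computation $i_{t+m}-J_{t+m}\equiv -s_{t+m}\pmod{r^k}$.
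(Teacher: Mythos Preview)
Your argument is correct and is precisely what the paper intends: its own proof reads, in full, ``The proof is similar to those given for the previous constructions,'' and your direct autocorrelation computation, with the case split on whether the last-block shifts vanish, is exactly the GAOP-style calculation of the preceding theorem redone with the odd-$r$ Chu form $u_n=\omega^{r^k p\,n(n+1)/2}$. One cosmetic point: in your even-$r$ aside, the Corollary's hypothesis is that $\textbf{S}$ has the GAOP (which the proof of the preceding theorem establishes), not merely that $\textbf{S}$ is perfect; this does not touch the odd-$r$ argument.
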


\begin{proof}
The proof is similar to those given for the previous constructions. 
\end{proof}

\section{Conclusions}
We have generalised the AOP to higher dimensions and showed that a $n$-dimensional
array with the GAOP is perfect. Using the GAOP, we have derived a number of perfect array
constructions. Each of these array constructions are bounded in size. It is unknown if there 
exist array constructions with the GAOP which are unbounded in size. 

\bibliographystyle{abbrv}

\section*{Appendix I -- Implementation of the Constructions}
In this appendix we show the implementations of the $n$--dimensional constructions in the computer algebra system, 
\href{http://www.wolfram.com}{Mathematica} (version 8.0). (All arrays are given in index notation, that is, the 
mapping: $e^{2 \pi \sqrt{-1} s_n/r} \rightarrow s_n$.) \\

We begin with the code for periodic cross-correlation, {\tt XCV} and autocorrelation, {\tt ACV}:
{\small
\begin{verbatim}
In[1]:= XCV[a_, b_, r_Integer] := Block[{A, B},
    A = Developer`ToPackedArray[ Exp[(2. Pi I a)/r] ];
    B = Developer`ToPackedArray[ Exp[(-2. Pi I b)/r] ]; 
    Chop[ListCorrelate[A, B, 1], 1*^-5]]
  
In[2]:= ACV[m_, r_Integer] := XCV[m, m, r]
\end{verbatim}}

The function {\tt index} takes as input the dimension, {\tt d}, and returns the index function for the $d$--dimensional array, 
$\displaystyle\prod_{n=d/2}^{d-1} i_n + \sum_{n=0}^{d/2-1} i_n\, i_{n+d/2}$ (note that in Mathematica, array indexing starts at 1):

{\small
\begin{verbatim}
In[1]:= index[d_?EvenQ] := 
     Function @@ {Sum[Slot[n] Slot[n + d/2], {n, d/2}] + Product[Slot[n], {n, d/2 + 1, d}]}
\end{verbatim}
}

For example, we compute the index function for the 2--dimensional array from Construction \Roman{gfrank}:
{\small
\begin{verbatim}
In[2]:= index[2]
Out[2]= #1 + #1 #2 &
\end{verbatim}
}

And now the index function for the 8--dimensional array: 
{\small
\begin{verbatim}
In[3]:= index[8]
Out[3]= #1 #5 + #2 #6 + #3 #7 + #4 #8 + #4 #5 #6 #7 #8 &
\end{verbatim}
}
The following function implements \hyperlink{gfrankconst}{Construction \Roman{gfrank}}. It takes as inputs the number of 
roots of unity, {\tt nr}, and the number of dimensions, {\tt nd} and returns the multi--dimensional perfect array, \textbf{S}: 
{\small
\begin{verbatim}
In[4]:= ConstructionVI[nr_Integer, nd_?EvenQ] := With[{indexF = index[2 nd]}, 
            ArrayFlatten[Mod[Array[indexF, Table[nr, {2 nd}]], nr], nd]]
\end{verbatim}
}
For example, the following is a perfect 4--dimensional binary array:
{\small
\begin{verbatim}
In[5]:= ConstructionVI[2, 4]
\end{verbatim}
${\verb~Out[5]= ~}\left[
\begin{array}{cccc}
 \left[
\begin{array}{cccc}
 1 & 1 & 0 & 1 \\
 1 & 0 & 0 & 0 \\
 0 & 0 & 1 & 0 \\
 1 & 0 & 0 & 0
\end{array}
\right] & \left[
\begin{array}{cccc}
 1 & 0 & 0 & 0 \\
 0 & 1 & 1 & 1 \\
 0 & 1 & 1 & 1 \\
 0 & 1 & 1 & 1
\end{array}
\right] & \left[
\begin{array}{cccc}
 0 & 0 & 1 & 0 \\
 0 & 1 & 1 & 1 \\
 1 & 1 & 0 & 1 \\
 0 & 1 & 1 & 1
\end{array}
\right] & \left[
\begin{array}{cccc}
 1 & 0 & 0 & 0 \\
 0 & 1 & 1 & 1 \\
 0 & 1 & 1 & 1 \\
 0 & 1 & 1 & 1
\end{array}
\right] \\
 \left[
\begin{array}{cccc}
 1 & 0 & 0 & 0 \\
 0 & 1 & 1 & 1 \\
 0 & 1 & 1 & 1 \\
 0 & 1 & 1 & 1
\end{array}
\right] & \left[
\begin{array}{cccc}
 0 & 1 & 1 & 1 \\
 1 & 0 & 0 & 0 \\
 1 & 0 & 0 & 0 \\
 1 & 0 & 0 & 0
\end{array}
\right] & \left[
\begin{array}{cccc}
 0 & 1 & 1 & 1 \\
 1 & 0 & 0 & 0 \\
 1 & 0 & 0 & 0 \\
 1 & 0 & 0 & 0
\end{array}
\right] & \left[
\begin{array}{cccc}
 0 & 1 & 1 & 1 \\
 1 & 0 & 0 & 0 \\
 1 & 0 & 0 & 0 \\
 1 & 0 & 0 & 0
\end{array}
\right] \\
 \left[
\begin{array}{cccc}
 0 & 0 & 1 & 0 \\
 0 & 1 & 1 & 1 \\
 1 & 1 & 0 & 1 \\
 0 & 1 & 1 & 1
\end{array}
\right] & \left[
\begin{array}{cccc}
 0 & 1 & 1 & 1 \\
 1 & 0 & 0 & 0 \\
 1 & 0 & 0 & 0 \\
 1 & 0 & 0 & 0
\end{array}
\right] & \left[
\begin{array}{cccc}
 1 & 1 & 0 & 1 \\
 1 & 0 & 0 & 0 \\
 0 & 0 & 1 & 0 \\
 1 & 0 & 0 & 0
\end{array}
\right] & \left[
\begin{array}{cccc}
 0 & 1 & 1 & 1 \\
 1 & 0 & 0 & 0 \\
 1 & 0 & 0 & 0 \\
 1 & 0 & 0 & 0
\end{array}
\right] \\
 \left[
\begin{array}{cccc}
 1 & 0 & 0 & 0 \\
 0 & 1 & 1 & 1 \\
 0 & 1 & 1 & 1 \\
 0 & 1 & 1 & 1
\end{array}
\right] & \left[
\begin{array}{cccc}
 0 & 1 & 1 & 1 \\
 1 & 0 & 0 & 0 \\
 1 & 0 & 0 & 0 \\
 1 & 0 & 0 & 0
\end{array}
\right] & \left[
\begin{array}{cccc}
 0 & 1 & 1 & 1 \\
 1 & 0 & 0 & 0 \\
 1 & 0 & 0 & 0 \\
 1 & 0 & 0 & 0
\end{array}
\right] & \left[
\begin{array}{cccc}
 0 & 1 & 1 & 1 \\
 1 & 0 & 0 & 0 \\
 1 & 0 & 0 & 0 \\
 1 & 0 & 0 & 0
\end{array}
\right]
\end{array}
\right]$}

\bigskip

The following code snippet gives the full calculation from Example \ref{GAOPexample}. We begin by
showing \textbf{S} satisfies the first condition of the GAOP. The following code generates all 
arrays $\textbf{S}'[i,j]$, for $0\leq i < 3$ and $0 \leq j < 3$:
{\small
\begin{verbatim}
In[6]:= allSps = Join @@ Table[
    Table[
         ConstructionVI[3, 2][[3 n + 1 + q, 3 m + 1 + r]], 
    {n, 0, 2}, {m, 0, 2}],
{q, 0, 2}, {r, 0, 2}];
\end{verbatim}}
We now generate all 36 distinct pairs of arrays, compute their cross correlation, and count the 
number of zeros in the resulting array of cross-correlation values: 
{\small
\begin{verbatim} 
In[7]:= Count[XCV[#1, #2, 3], 0, {2}] & @@@ 
      Union[ Select[Sort /@ Tuples[allSps, 2], First[#] != Last[#] &] ]
      
Out[7]= {9, 9, 9, 9, 9, 9, 9, 9, 9, 9, 9, 9, 9, 9, 9, 9, 9, 9, \
      9, 9, 9, 9, 9, 9, 9, 9, 9, 9, 9, 9, 9, 9, 9, 9, 9, 9}
\end{verbatim}}
We see that for each pair of arrays there are 9 zero cross-correlation values. Thus, each pair of 
arrays are orthogonal. We now show \textbf{S} satisfies the second condition of the GAOP. For
each array, we compute its autocorrelation for all shifts. We then sum all the autocorrelations 
together. 
{\small
\begin{verbatim}
In[8]:= Chop @ Total[ACV[#, 3] & /@ allSps]

Out[8]= {{81, 0, 0}, {0, 0, 0}, {0, 0, 0}}
\end{verbatim}}
So for all off-peak shifts, the sum of the autocorrelation is zero. Thus, \textbf{S} satisfies the 
second condition of the GAOP. 

\bigskip

The following function implements \hyperlink{const2rr2Dtarget}{Construction \Roman{const2rr2D}}. It takes
as input the number of roots, {\tt d}, and returns the perfect array \textbf{S}:
{\small
\begin{verbatim}
In[9]:= ConstructionVII[d_?EvenQ] := Mod[Array[Floor[#1 #2/(2 d)] &, {2 d^2, 2 d^2}], d]
\end{verbatim}
}
For example, the following is a perfect binary array: 
{\small
\begin{verbatim}
In[10]:= ConstructionVII[2]
\end{verbatim}
$\verb~Out[10]= ~\left[
\begin{array}{cccccccc}
 0 & 0 & 0 & 1 & 1 & 1 & 1 & 0 \\
 0 & 1 & 1 & 0 & 0 & 1 & 1 & 0 \\
 0 & 1 & 0 & 1 & 1 & 0 & 1 & 0 \\
 1 & 0 & 1 & 0 & 1 & 0 & 1 & 0 \\
 1 & 0 & 1 & 1 & 0 & 1 & 0 & 0 \\
 1 & 1 & 0 & 0 & 1 & 1 & 0 & 0 \\
 1 & 1 & 1 & 1 & 0 & 0 & 0 & 0 \\
 0 & 0 & 0 & 0 & 0 & 0 & 0 & 0
\end{array}
\right]$}\\

These arrays are highly symmetric and exhibit a beautiful structure. The following is a perfect 
$968\times 968$ perfect array over 22 roots of unity:
{\small
\begin{verbatim}
In[11]:= ArrayPlot[ConstructionVII[22], Frame -> False]
\end{verbatim}
\verb~Out[11]= ~ \vcenteredinclude{width=8cm}{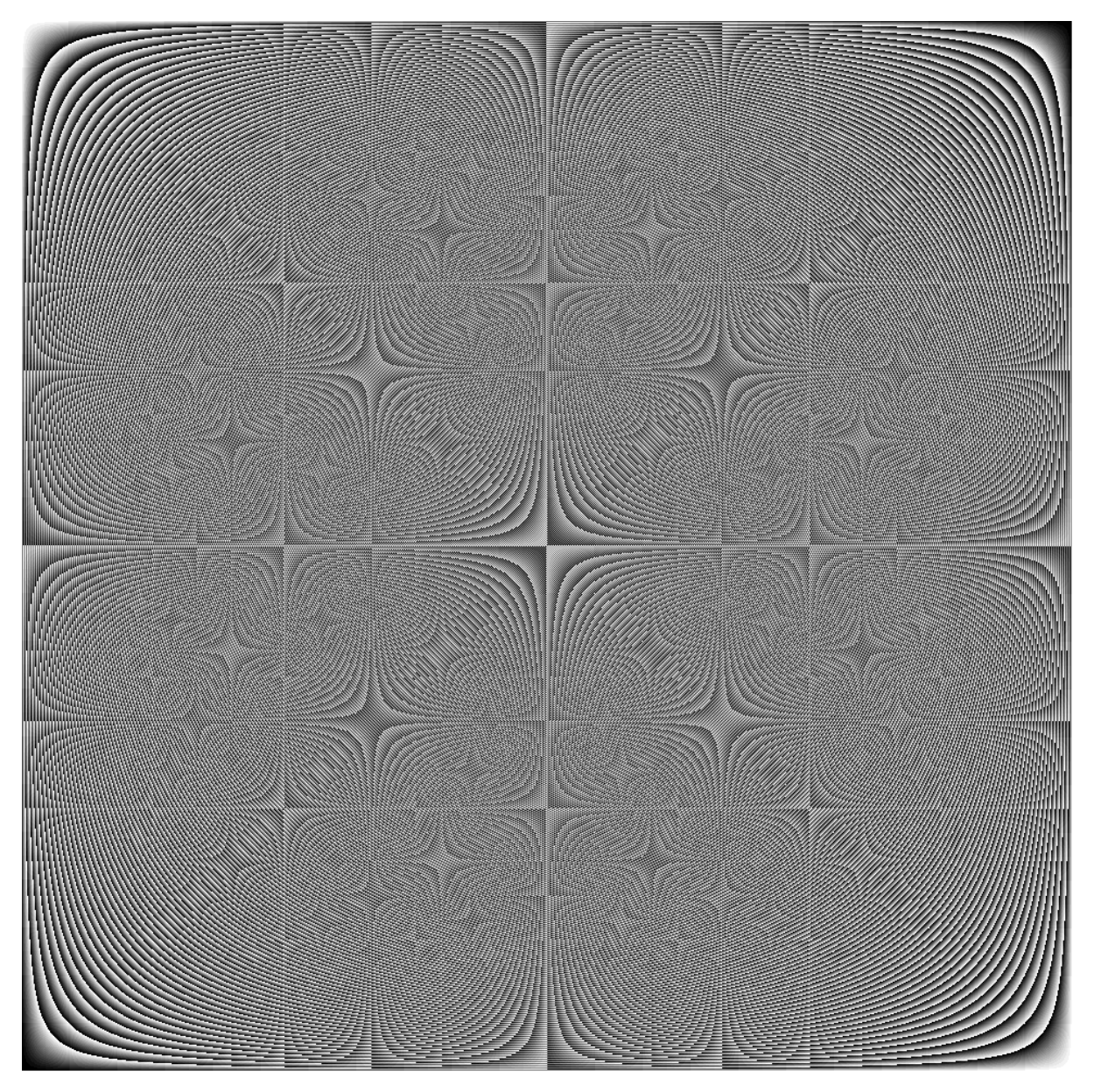}}\\

The following function implements \hyperlink{const2rrtarget}{Construction \Roman{const2rr}}. It takes as inputs the number of 
roots of unity, {\tt nr}, and the number of dimensions, {\tt nd} and returns the multi--dimensional perfect array, \textbf{S}: 
{\small
\begin{verbatim}
In[12]:= indexVIII[d_?EvenQ] := Function @@ {Sum[Slot[n] Slot[n + d/2], {n, d/2}]}
 
In[13]:= ConstructionVIII[nr_Integer, nd_?EvenQ] := With[{indexP = indexVIII[nd][[1]]},
     Array[Mod[Floor[indexP/(2 nr)], nr] &, Table[2 nr^2, {nd}]]]
\end{verbatim}
}

The following function implements \hyperlink{constmilewgconst}{Construction \Roman{constmilewg}}. It takes as input the 
number of roots of unity, {\tt nr}, the $k$ parameter, {\tt k}, and the number of dimensions, {\tt nd} and returns the 
multi--dimensional perfect array, \textbf{S}: 
{\small
\begin{verbatim}
In[14]:= indexIX[nr_?EvenQ, k_Integer, nd_?EvenQ] := 
      Function @@ {Sum[Slot[n] (Slot[n] + 1) nr^k/2, {n, 1, nd/2}] + index[nd][[1]]}

In[15]:= indexIX[nr_?OddQ, k_Integer, nd_?EvenQ] := 
      Function @@ {Sum[Slot[n] (Slot[n] + 1) nr^k, {n, 1, nd/2}] + index[nd][[1]]}

In[16]:= ConstructionIX[nr_Integer, k_Integer, nd_?EvenQ] := With[{indexF = indexIX[nr, k, nd]}, 
  ArrayFlatten[
      Mod[Array[indexF, Table[nr^(k + 1), {nd/2}]~Join~Table[nr^k, {nd/2}], 0], 
          nr^(k + 1)], nd/2]]
\end{verbatim}
}

The array plot below shows the beautiful structure of a perfect $1024\times 1024$ array over $64$ roots of unity:
{\small
\begin{verbatim}
In[17]:= ArrayPlot[ConstructionIX[4, 2, 4], Frame -> False]
\end{verbatim}
\verb~Out[17]= ~ \vcenteredinclude{width=8cm}{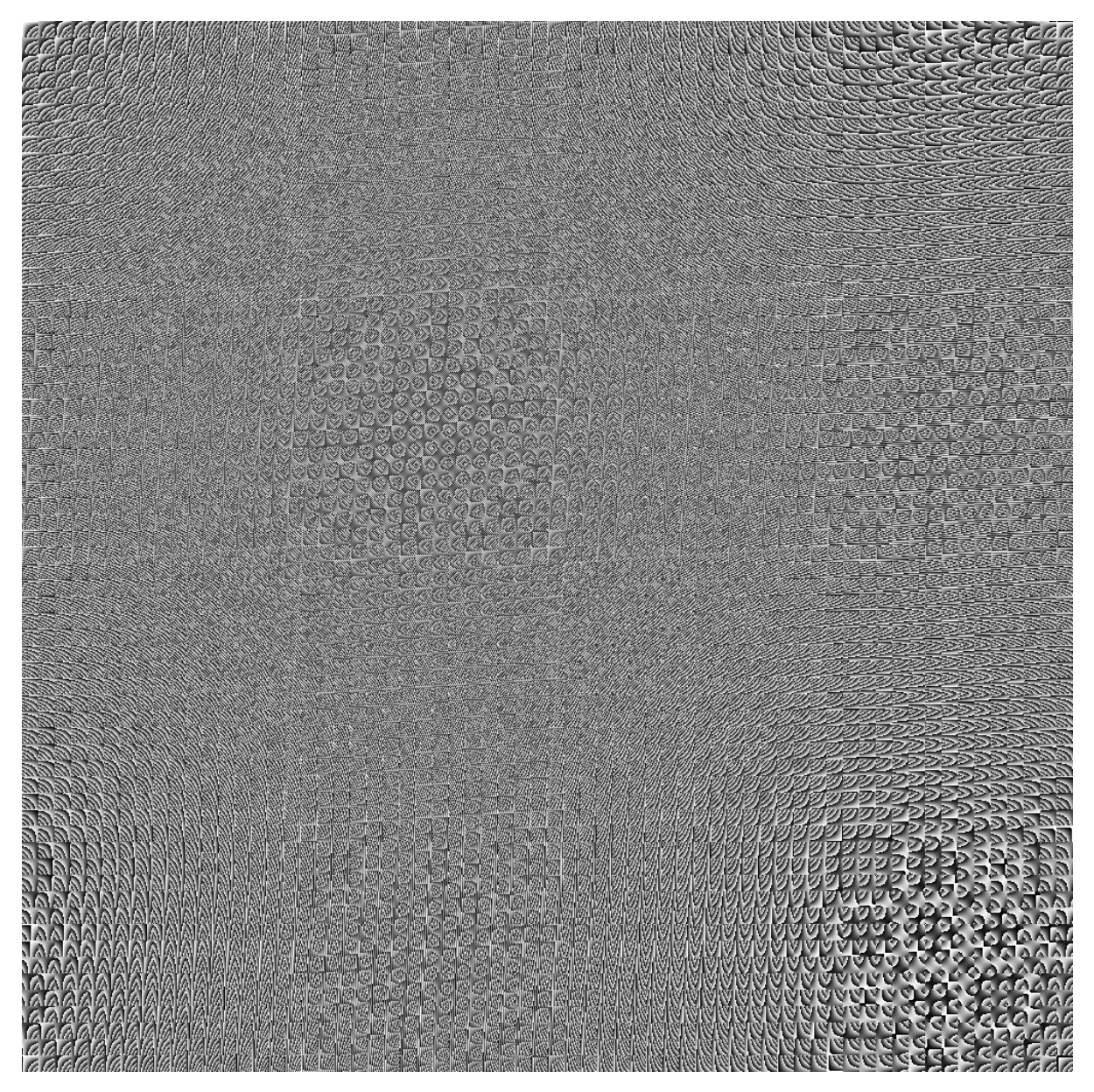}}\\

\end{document}